\documentclass[conference, 10pt, twocolumn]{ieeeconf}

\IEEEoverridecommandlockouts
\overrideIEEEmargins
 
\usepackage[usenames]{color}
\usepackage{enumerate}
\usepackage{url}
\usepackage{subfigure}
\usepackage{amsfonts,mathrsfs}
\usepackage{amssymb,amsmath}
\usepackage{verbatim}
\usepackage{acronym}
\usepackage{mathtools}
\usepackage{graphicx}
\usepackage{kbordermatrix}

%\newcommand{\remove}[1]{}
%\def\jnt#1{{{\bf #1}}}
%\long\def\red#1{{{\color{red}#1}}}

\def\fskip#1{}

\newtheorem{theorem}{Theorem}[section]

\newtheorem{definition}[theorem]{Definition}

\newtheorem{lemma}[theorem]{Lemma}

\newtheorem{proposition}[theorem]{Proposition}
\newtheorem{remark}[theorem]{Remark}

\def\1{{\bf 1}}

\def\R{\mathbb{R}}

\newcommand{\remove}[1]{}

% Left delimiter

\DeclarePairedDelimiter\floor{\lfloor}{\rfloor}

\begin{document}
\title{Convergence Time of Quantized Metropolis Consensus Over Time-Varying Networks}

\author{\authorblockN{Tamer Ba\c{s}ar, \ \ Seyed Rasoul Etesami, \ \ Alex Olshevsky}
  \authorblockA{}}
%\thanks{\textdagger Tamer Ba\c{s}ar and Seyed Rasoul Etesami are with
%Coordinated Science Laboratory,
%Department of Electrical and Computer Engineering,
%University of Illinois at Urbana-Champaign, Urbana, IL, 61801;
%E-mail: {basar1, etesami1}@illinois.edu}
%\thanks{
%\textdaggerdbl Alex Olshevsky is with
%Department of Industrial and Enterprise Systems Engineering,
%University of Illinois at Urbana-Champaign, Urbana, IL, 61801;
%E-mail: aolshev2@illinois.edu}
%\thanks{Research supported in part by the ``Cognitive \& Algorithmic Decision Making" grant through the College of Engineering of the University of Illinois, and in parts
%by AFOSR MURI Grant FA 9550-10-1-0573 and NSF grant CCF 11-11342. An earlier version of this paper was presented at the 2014 IEEE Conference on Decision and Control
%and is listed as \cite{etesami2014metropolis} in the references section.}
%}
\maketitle
\begin{abstract}
We consider the quantized consensus problem on undirected time-varying connected graphs with $n$ nodes, and devise a protocol with fast convergence time to the set of consensus points. Specifically, we show that when the edges of each network in a sequence of connected time-varying networks are activated based on Poisson processes with Metropolis rates, the expected convergence time to the set of consensus points is at most $O(n^2\log^2 n)$, where each node performs a constant number of updates per unit time. 
\end{abstract}

\section{Introduction} 
There has been much recent interest in the design of control protocols for distributed systems, motivated in part  by the need to develop protocols for networks of autonomous agents characterized by the lack of centralized information and time-varying connectivity. A canonical problem within the field is the so-called average consensus problem, wherein a group of agents must agree on the average of their initial values while interacting with neighbors in a (possibly time-varying) network. Protocols for consensus problems must be distributed, relying only on local information at each node, and robust to unexpected changes in topology.

It is well understood by now that protocols for the consensus play an important role in a number of more sophisticated multi-agent tasks. We mention distributed optimization, coverage control, formation control, cooperative statistical inference, power control, load balancing, epidemic routing, as examples of control and coordination problems with proposed solutions relying crucially on consensus.

Our work is motivated by the observation that often working with real-valued variables in multi-agent control is neither necessary nor efficient. Indeed, limited memory and storage at each agent often forces the variables kept by each agent to lie in a discrete set. We therefore consider the quantized consensus problem, a variation on the consensus problem where the values of each agent are constrained to be integers lying within a certain range. Previous literature on this problem includes
\cite{Kashyab,etesami2014metropolis,zhu2011convergence,carli2010gossip,benezit2009interval, ishiiconv}. As some of the real-world applications of the problem under consideration, one can consider information fusion in
sensor networks \cite{xiao2007distributed}, e. g., when every sensor in a sensor
network has a different measurement of the temperature and
each sensor's goal is to compute the average temperature, and load balancing in processor networks which has various
applications in computer science.

The original paper \cite{Kashyab} contained upper bounds for a natural quantized consensus protocol on a variety of common graphs. A few years later, the work of \cite{zhu2011convergence} proposed a quantized consensus protocol with upper bound of
$O(n^4)$ on the expected convergence time for any fixed graph. For dynamic graphs, \cite{zhu2011convergence} obtained a convergence time scaling of $O(n^8)$. The paper \cite{ishiiconv} obtained an upper bound of $O(n^2)$, but only on complete graphs. Recently, it was shown in \cite{etesami2014arxiv-QC-static-dynamic} that a certain ``unbiased'' quantized consensus protocol has maximum expected convergence time of $O(nmD\log n)$ on static networks where $m$ and $D$ denote the number of edges and the diameter of the network, and has maximum expected convergence time $O(nm_{\max}D_{\max}\log^2n)$ on connected time-varying networks, where $m_{\max}$ and $D_{\max}$ denote, respectively, the maximum number of edges and the maximum diameter in the sequence of time-varying networks. Unbiasedness means that each step of the protocol was based on choosing a random edge uniformly among all possible edges in the network.

A faster upper bound on convergence time and only for static networks was given in the recent paper \cite{shang2013upper}, where a protocol was provided whose expected convergence time in general static networks is $O(n^2 \log n)$ \footnote{The bound given in \cite{shang2013upper} was $O(n^3 \log n)$, but this was a count of the total number of updates; in terms of time, this leads to a quadratic expected convergence time.}. As of the time of writing this paper, the upper bounds of $O(n^2\log n)$ and $O(nm_{\max}D_{\max}\log^2n)$, respectively, are the fastest protocols known to us for randomized quantized consensus over static and dynamic networks \cite{etesami2014arxiv-QC-static-dynamic}, \cite{shang2013upper}, respectively. It is worth noting that the convergence speed of a protocol is measured by the maximum over all initial inputs of the expected time that the given protocol will run until reaching consensus, where each node performs a constant number of updates per unit time.

In this paper, we analyze a protocol wherein nodes cooperate to perform updates on edges connecting them at so-called Metropolis rates. We are able to show that when each node performs a constant number of updates per unit time, convergence time on connected time-varying networks is at most $O(n^2 \log^2 n)$, giving us the fastest quantized consensus convergence guarantee over time-varying networks.

The paper is organized as follows.  In Section~\ref{sec:Problem-Formulation}, we  formulate the quantized consensus problem and discuss some of its properties. In Section~\ref{sec:mainresults}, we show that quantized consensus with Metropolis rates has essentially quadratic convergence time on fixed networks; this result is not better than the 
previous convergence time of \cite{shang2013upper}, but we include it here because our later results rely on it. In Section \ref{sec:time-varying}, we prove our main result, namely that Metropolis quantized consensus converges in essentially quadratic time on time-varying networks. We conclude the paper by identifying some future directions of research in Section~\ref{sec:conclusion}.

\textbf{Notation}:
We let $[n]=\{1,2,\ldots,n\}$. For an undirected graph $\mathcal{G}=(V,\mathcal{E})$ we let $N(x)$ be the set of neighbors of $x$, and furthermore, let $d(x)$ denote the degree of $x$ i.e., $d(x) = |N(x)|$. For a given vector $v$, we denote its $i$th entry by $v_i$ and its transpose by $v'$. We say that a matrix $A$ with nonnegative entries is stochastic if each row of $A$ sums to 1. We say $A$ is doubly-stochastic if both $A$ and $A'$ are stochastic. 

\section{Problem Formulation}\label{sec:Problem-Formulation} 
In this section, we assume that we are given a fixed, undirected, connected graph $\mathcal{G}=(V,\mathcal{E})$ with $|V|=n$ and without self-loops. The Metropolis Markov chain on this graph is then defined as follows.

Let $\mathcal{M}$ be a square matrix whose $ij$'th entry is defined as
\begin{align}\label{eq:Metropolis-chain}
\mathcal{M}_{ij} = \begin{cases} \frac{1}{\max\{d(i),d(j)\}}, & \mbox{if} \ i\neq j,~ j \in N(i)  \\
1-\sum_{j \in N(i)}\mathcal{M}_{ij}, & \mbox{if}  \ i=j \\
0 & \mbox{otherwise},
\end{cases}
\end{align} 
Note that $\mathcal{M}$ is  symmetric, nonnegative, and doubly stochastic. We refer to the Markov chain which transitions according to $\mathcal{M}$ as the \textit{Metropolis chain}. Moreover, given nodes $x,y \in V$, the \textit{hitting time} $H^{\rm m}(x,y)$ is the expected time until the Metropolis chain with initial position at node $x$ reaches node $y$ (quantities associated with the Metropolis chain will generally be denoted with an ``m'' superscript). For future use and convenience, we adopt the notation $\lambda_{ij} = 1/\max(d(i),d(j))$ whenever $\{i,j\} \in \mathcal{E}$ and $\lambda_{ij}=0$ otherwise.

\subsection{Quantized Metropolis dynamics}
We next introduce a continuous time quantized process based on Metropolis weights, whose behavior will turn out to be related to the Metropolis chain. For each link $\{i,j\}\in \mathcal{E}$ of the graph $\mathcal{G}$, we consider a Poisson process of rate $\lambda_{ij}$. Each node $i$ begins with an integer value $x_i(0)$ which lies in the range $[l,L]$. Each time the process corresponding to an edge $\{i,j\} \in \mathcal{E}$ registers an arrival, the two nodes $i,j$ perform the quantized consensus update from \cite{Kashyab}:
\begin{align}\label{eq:Quantize-rule}
x_i(t^{+}) = \begin{cases} x_i(t)-1, & \mbox{if} \ x_i(t)>x_j(t) \\
x_i(t)+1, & \mbox{if}  \ x_i(t)<x_j(t) \\
x_i(t), & \mbox{if}  \ x_i(t)=x_j(t),
\end{cases}
\end{align}
and likewise for node $j$. In other words, each $x_i(t)$ is an integer-valued jump process whose jumps occur whenever an arrival occurs at any
edge incident on $i$.

Note that the above update rule allows for the possibility that $x_i(t^+)=x_i(t)$. In this case, we will say that the update at time $t$ was {\em trivial}. Furthermore, observe that if $|x_i(t) - x_j(t)| = 1$, then the update of Eq. (\ref{eq:Quantize-rule}) will cause nodes $i$ and $j$ to swap values, i.e., $x_i(t^+)=x_j(t), x_j(t^+)  = x_i(t)$. In this case, we will also say that the update was trivial. If neither of these two cases has occurred during an update, we will say that the update was {\em non-trivial}. Simply speaking, a non-trivial update refers to the case where the incident nodes of an activated edge have integer values which differ by at least 2. 

It has been shown earlier in \cite{Kashyab} that such dynamics based on update rule \eqref{eq:Quantize-rule} will converge with probability 1 to consensus set defined by
\begin{align}\nonumber
\mathcal{C}=\Big\{x|x_i\!\in\! \{\floor{\bar{x}(0)},\floor{\bar{x}(0)}\!\!+\!\!1\}, \frac{\sum_{i=1}^{n}x_i}{n}=\bar{x}(0), i=\in [n]\Big\}.
\end{align}
In the remainder of this paper our goal is to provide an upper bound on the expected convergence time of the quantized Metropolis dynamics to the consensus set. 

\section{Convergence Time over Static Networks}\label{sec:mainresults}

We now begin the analysis of the convergence time of the quantized Metropolis algorithm over a fixed graph $\mathcal{G}$. The bounds derived in this section are of the same order of magnitude as the previous work \cite{shang2013upper}, and the proof follows similar lines as well; however we are including it here as our analysis over time-varying networks in the next section relies on these results. 

A key step in our analysis is to bound the expected time until the first nontrivial update takes place. For this purpose, we let $\overline{T}$ be the maximum expected time such that a nontrivial update takes place over all possible configuration of integers. More precisely, for integers $j_1, \ldots, j_n$ let $T(j_1, \ldots, j_n)$ be the expected time until a nontrivial update takes place when node $i$ begins with integer value $j_i$. Then  \[ \overline{T} = \max_{\substack{(j_1, \ldots, j_n) \in \mathbb{Z}^n, l\leq j_i\leq L \\ (j_1, \ldots, j_n) \notin \mathcal{C}}}~~ T(j_1, \ldots, j_n). \]

\begin{definition}  
Consider  two random walkers moving based on whether the activated edge in the quantized Metropolis dynamics is incident to them. That is, if one of the walkers is at node $x$, and if the next edge to register an arrival is incident to $x$, i.e., if it is $\{x,y\}$ for some $y\in N(x)$, then the random walker moves from $x$  to $y$.  We refer to such a process as the \textit{original} process. We denote by $M^{\rm o}(x,y)$ the expected ``meeting time'' of this process, defined to be the expected time until an edge incident to both walkers registers an arrival provided the two walkers started at nodes $x,y$ with $x \neq y$ (in general, we will denote quantities associated with this
process with an ``o'' superscript). By convention, we set $M^{\rm o}(x,x)=0$ for all $x \in V$.
\end{definition}

\begin{proposition}
$\overline{T} = \max_{x, y \in V} ~M^{\rm o}(x,y)$.
\end{proposition}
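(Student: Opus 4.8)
The plan is to prove the two inequalities $\overline{T}\ge \max_{x,y\in V}M^{\rm o}(x,y)$ and $\overline{T}\le \max_{x,y\in V}M^{\rm o}(x,y)$ separately, in each case coupling the quantized Metropolis dynamics with two walkers of the original process driven by the \emph{same} family of Poisson clocks.

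For the lower bound, fix a pair $x^{*}\neq y^{*}$ attaining $\max_{x,y}M^{\rm o}(x,y)$ (the maximizing pair is distinct since the meeting time of the original process is positive for distinct nodes on a connected graph), and run the dynamics from the configuration in which $x^{*}$ holds the value $2$, $y^{*}$ holds $0$, and every other node holds $1$. This configuration has average $1$ but contains a $0$, so it is not in $\mathcal{C}$. Trivial updates preserve the multiset $\{0,1,\dots,1,2\}$, so at every instant before the first non-trivial update exactly one node holds $2$ and exactly one holds $0$; a firing edge between two ``$1$''-nodes does nothing, a firing edge from the ``$2$''-node (resp.\ ``$0$''-node) to a ``$1$''-node swaps their values, and the only way to get a non-trivial update is for the edge joining the current position of the $2$ to the current position of the $0$ to fire. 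Hence up to that instant these two positions move exactly as two original-process walkers started at $x^{*}$ and $y^{*}$, and the first non-trivial update occurs precisely when an edge incident to both of them fires; therefore the expected time to the first non-trivial update from this configuration equals $M^{\rm o}(x^{*},y^{*})$, and so $\overline{T}\ge \max_{x,y}M^{\rm o}(x,y)$.

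For the upper bound, take an arbitrary configuration $(j_{1},\dots,j_{n})\notin\mathcal{C}$; a short check shows that $\max_i j_i-\min_i j_i\le 1$ would force the configuration into $\mathcal{C}$, so $M:=\max_i j_i$ and $m:=\min_i j_i$ satisfy $M-m\ge 2$. Pick $x_{0}$ with $j_{x_{0}}=M$ and $y_{0}$ with $j_{y_{0}}=m$, and run, on the same clocks, two original-process walkers $W_{t},W_{t}'$ started at $x_{0},y_{0}$. Let $\tau$ be the first non-trivial update time; since all updates on $[0,\tau)$ are swaps or no-ops, the values $M$ and $m$ persist there. The key step is the coupling invariant: for all $t<\tau$, $W_{t}$ sits on a node currently holding $M$ and $W_{t}'$ on a node currently holding $m$. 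This is preserved by the rule ``a walker follows its incident edge whenever it fires'': if the neighbor it steps to also holds $M$ the real update is a no-op and that neighbor still holds $M$; if the neighbor holds $M-1$ the update swaps, so the neighbor now holds $M$; a neighbor holding $\le M-2$ would make the update non-trivial, contradicting $t<\tau$; and edge-firings not incident to $W_{t}$ can only shuffle the value $M$ among \emph{other} nodes, leaving $W_{t}$'s node untouched (symmetrically for $W_{t}'$ and $m$). Granting the invariant, the first time $W_{t}$ and $W_{t}'$ become adjacent with the connecting edge firing --- i.e.\ the meeting time $\tau_{\rm meet}$ of the original process --- the two endpoints hold $M$ and $m$ with $M-m\ge 2$, which forces a non-trivial update; hence $\tau\le \tau_{\rm meet}$ almost surely, so $T(j_{1},\dots,j_{n})=\EXP{\tau}\le\EXP{\tau_{\rm meet}}=M^{\rm o}(x_{0},y_{0})\le \max_{x,y}M^{\rm o}(x,y)$. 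Maximizing over configurations yields $\overline{T}\le \max_{x,y}M^{\rm o}(x,y)$, and the two inequalities give the claim.

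I expect the main obstacle to be the coupling invariant in the upper bound, specifically handling the case where several nodes simultaneously carry the extreme value $M$ (or $m$). A naive ``track the token'' coupling makes the tracked walker idle whenever it encounters another copy of $M$, so its meeting time could exceed $M^{\rm o}$ and the inequality would fail; the remedy is to let the walker always step along its firing edge, which is legitimate precisely because landing on another node holding $M$ keeps the invariant. Checking that the invariant survives every type of edge-firing --- incident to $W_{t}$, incident to $W_{t}'$, incident to some other extreme node, or incident to none --- is the only place where genuine care is needed.
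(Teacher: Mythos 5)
Your proposal is correct: both directions go through, and the coupling you describe (the extreme values $M$ and $m$ travel exactly like two original-process walkers until the first non-trivial update, which occurs precisely when an edge joining their carriers fires) is exactly the ``immediate'' argument the paper invokes while omitting the proof, with your lower-bound configuration $(2,0,1,\dots,1)$ witnessing tightness. The only cosmetic fix is to place that configuration inside the admissible range, i.e.\ use the values $l$, $l+1$, $l+2$ (which requires $L\ge l+2$, the only case in which $\overline{T}$ is defined over a nonempty set), rather than $0,1,2$.
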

\begin{proof} 
The proof is immediate and is not included due to space limitation. 
\end{proof}

Based on the above proposition, our next step is to bound $\max_{x,y \in V} M^{\rm o}(x,y)$. We will actually find it easier to instead bound a meeting time associated with a slightly different process, which we call the \textit{virtual} process, defined next. 

\begin{definition}
The virtual process is identical to the original process until the two walkers $x,y$ become each other's neighbors in $\mathcal{G}$. At that time, the edge connecting them registers arrivals according to a Poisson process of rate $2 \lambda_{xy} $. We denote the expected meeting time function of the virtual process by $M^{\rm v}(x,y)$ (in general, we will denote quantities associated with this process with a ``v'' superscript). By convention we set $M^{\rm v}(x,x) = 0$ for all $x \in V$.
\end{definition}

\begin{definition}
A function $h:\Omega\rightarrow \mathbb{R}$ is called harmonic at a vertex $x\in \Omega$ for a Markov chain with transition probability matrix $P$ if $h(x)=\sum_{y\in \Omega}P(x,y)h(y)$.
\end{definition}

\begin{remark}\label{rem:harmonic-equality}
Given a nonempty subset $B\subset \Omega$ and a Markov chain with an irreducible transition matrix $P$, every harmonic function $h(\cdot):\Omega\rightarrow \mathbb{R}$ over $\Omega\setminus B$ which satisfies $h(x)\!\ge\! 0$, \!$\forall x\!\in\! B$, must be nonnegative over the entire $\Omega$ (see Proposition 9.1 of \cite{Markov-Book}).
\end{remark}

\begin{definition}
A vertex $\theta$ is called a hidden vertex of the Metropolis chain $\mathcal{M}$ if $H^{\rm m}(\theta,x)\leq H^{\rm m}(x,\theta), \forall x\in V$.
\end{definition}

\begin{remark}\label{rem:transitivity} 
It is known that the hitting times of a reversible Markov chain satisfy transitivity property, i.e., 
\begin{align}\nonumber 
\!\!\resizebox{0.01\hsize}{!}{$~$}\resizebox{1\hsize}{!}{$H^{\rm m}(x,y)\!+\!H^{\rm m}(y,z)\!+\!H^{\rm m}(z,x)\!=\!H^{\rm m}(x,z)\!+\!H^{\rm m}(z,y)\!+\!H^{\rm m}(y,x)$}, 
\end{align}see \cite{aldous2002reversible} for a proof. As a consequence every reversible Markov chain has at least one hidden vertex \cite{Fundamental}.
\end{remark}

In the following lemma we bound the expected meeting time of the virtual process by showing that the expected meeting time function of the virtual process $M^{\rm v}(x,y)$ satisfies essentially the same recursion as the function $\Phi^{\rm m}(\cdot,\cdot):V\times V \rightarrow \mathbb{R}$ defined by
\begin{align}\label{eq:phi-function}
\Phi^{\rm m}(x,y)=H^{\rm m}(x,y)+H^{\rm m}(y,\theta)-H^{\rm m}(\theta,y),
\end{align}
where $\theta$ is a fixed hidden vertex  of the Metropolis chain.

\begin{lemma}\label{thm:meeting-hitting-harmonic}
For all $x,y \in V$, we have $M^{\rm v}(x,y)\leq 6n^2$.
\end{lemma}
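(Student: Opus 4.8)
The plan is to establish the bound in two steps: first that $M^{\rm v}(x,y)\le\Phi^{\rm m}(x,y)$ for all $x,y$, with $\Phi^{\rm m}$ the function in \eqref{eq:phi-function}, and then that $\Phi^{\rm m}(x,y)\le 6n^2$. For the first step I would regard the virtual process as a continuous-time jump chain on the ordered pairs $(u,v)$, $u\neq v$, absorbed when the walkers meet, so that $M^{\rm v}$ is the unique bounded solution of the associated meeting-time equation that vanishes on the diagonal $\{(x,x)\}$; it then suffices to verify that $\Phi^{\rm m}$ is a nonnegative \emph{supersolution} of that equation and appeal to the maximum principle in Remark~\ref{rem:harmonic-equality}.

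To set up the comparison, write $s_u:=1-\mathcal{M}_{uu}=\sum_{w\in N(u)}\lambda_{uw}$. The Metropolis hitting times obey $s_u H^{\rm m}(u,y)=1+\sum_{w\in N(u)}\lambda_{uw}H^{\rm m}(w,y)$ for every $u\neq y$; since $\Phi^{\rm m}(x,y)$ differs from $H^{\rm m}(x,y)$ only by the term $H^{\rm m}(y,\theta)-H^{\rm m}(\theta,y)$, which does not involve $x$, the identical recursion holds for $\Phi^{\rm m}$ in its first argument. As $\Phi^{\rm m}$ is symmetric --- which is precisely the transitivity identity of Remark~\ref{rem:transitivity} --- the recursion also holds in the second argument, and adding the two gives, for all $u\neq v$,
\begin{equation*}
(s_u+s_v)\,\Phi^{\rm m}(u,v)=2+\sum_{w\in N(u)}\lambda_{uw}\Phi^{\rm m}(w,v)+\sum_{w\in N(v)}\lambda_{vw}\Phi^{\rm m}(u,w).
\end{equation*}
The point of passing to the virtual process is that doubling the firing rate of the edge joining two neighbouring walkers makes the total exit rate out of \emph{every} pair $(u,v)$ equal to $s_u+s_v$: this is immediate when $v\notin N(u)$, and when $v\in N(u)$ the edge $\{u,v\}$ contributes $2\lambda_{uv}$ to the meeting rate while the remaining $u$- and $v$-incident edges contribute $(s_u-\lambda_{uv})+(s_v-\lambda_{uv})$. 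Consequently $M^{\rm v}$ satisfies $(s_u+s_v)M^{\rm v}(u,v)=1+\sum_{w\in N(u)\setminus\{v\}}\lambda_{uw}M^{\rm v}(w,v)+\sum_{w\in N(v)\setminus\{u\}}\lambda_{vw}M^{\rm v}(u,w)$, with $M^{\rm v}\equiv 0$ on the diagonal. Comparing the two recursions: when $v\in N(u)$, peel the terms $\lambda_{uv}\Phi^{\rm m}(v,v)$ and $\lambda_{uv}\Phi^{\rm m}(u,u)$ off the sums in the $\Phi^{\rm m}$-recursion; because $\theta$ is a hidden vertex they satisfy $\Phi^{\rm m}(w,w)=H^{\rm m}(w,\theta)-H^{\rm m}(\theta,w)\ge 0$, and together with the free constant being $2$ rather than $1$ this shows that $\Phi^{\rm m}$ exceeds the right-hand side of the $M^{\rm v}$-equation pair by pair. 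Since moreover $\Phi^{\rm m}\ge 0$ throughout (again because $\theta$ hidden forces $H^{\rm m}(y,\theta)\ge H^{\rm m}(\theta,y)$) and $\Phi^{\rm m}$ is finite, the function $g:=\Phi^{\rm m}-M^{\rm v}$ is superharmonic for the embedded chain of the virtual process and nonnegative on the almost-surely-reached absorbing diagonal; Remark~\ref{rem:harmonic-equality} --- applied to $g$, equivalently optional stopping for the supermartingale $g$ along the embedded chain --- then gives $g\ge 0$, i.e.\ $M^{\rm v}(x,y)\le\Phi^{\rm m}(x,y)$ for all $x,y$.

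For the second step, discard the nonpositive term $-H^{\rm m}(\theta,y)$ to get $\Phi^{\rm m}(x,y)\le H^{\rm m}(x,y)+H^{\rm m}(y,\theta)$, and then use the quadratic bound $H^{\rm m}(a,b)\le 3n^2$ on the maximal hitting time of the Metropolis chain. That bound is standard: the chain is reversible with uniform stationary distribution, so the commute-time identity gives $H^{\rm m}(a,b)\le H^{\rm m}(a,b)+H^{\rm m}(b,a)=n\,R_{\mathrm{eff}}(a,b)$ with each edge $\{i,j\}$ carrying resistance $\max(d(i),d(j))$; routing a unit current along a shortest $a$--$b$ path, together with the observation that no vertex of $\mathcal{G}$ can be adjacent to four pairwise-consecutive vertices of a shortest path (whence the degrees along such a path sum to $O(n)$), bounds $R_{\mathrm{eff}}(a,b)$ by $O(n)$, and with the sharp constant one obtains $H^{\rm m}(a,b)\le 3n^2$ --- equivalently the known $O(n^2)$ hitting-time bound for the lazy Metropolis walk, halved. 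Hence $\Phi^{\rm m}(x,y)\le 6n^2$, and therefore $M^{\rm v}(x,y)\le 6n^2$.

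The step I expect to be the real obstacle is the middle one: writing down the continuous-time meeting recursion correctly, recognizing that the rate $2\lambda_{xy}$ is exactly what equalizes the exit rates so that the $M^{\rm v}$-recursion lines up with the one for $\Phi^{\rm m}$ up to the harmless ``$1$ versus $2$'' discrepancy, and checking the sign conditions --- nonnegativity of $\Phi^{\rm m}$ and of its diagonal values --- that make the maximum-principle comparison valid; it is precisely here that the transitivity identity and the hidden-vertex property are used essentially.
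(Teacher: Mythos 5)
Your argument follows the paper's route almost exactly: you compare $M^{\rm v}$ to the function $\Phi^{\rm m}$ of \eqref{eq:phi-function} by checking that both satisfy the same type of recursion over pairs, use the transitivity identity for symmetry of $\Phi^{\rm m}$, use the hidden-vertex property to get nonnegativity on the diagonal, invoke the maximum principle of Remark~\ref{rem:harmonic-equality}, and finish with an $O(n^2)$ bound on Metropolis hitting times. The observation that doubling the rate of the edge joining neighbouring walkers equalizes the total exit rate to $s_u+s_v=\Lambda_{uv}$ is correct and is exactly the point of the virtual process. However, there is a quantitative gap that prevents your write-up from actually delivering the stated constant $6n^2$.

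The issue is the ``$1$ versus $2$'' discrepancy you dismiss as harmless. Because you only show that $\Phi^{\rm m}-M^{\rm v}$ is a nonnegative \emph{super}solution, you obtain $M^{\rm v}(x,y)\le \Phi^{\rm m}(x,y)\le H^{\rm m}(x,y)+H^{\rm m}(y,\theta)\le 2\max_{a,b}H^{\rm m}(a,b)$, and you therefore need $\max_{a,b}H^{\rm m}(a,b)\le 3n^2$ to conclude. That bound is not established by your sketch: the effective-resistance argument you outline gives $R_{\mathrm{eff}}(a,b)\le\sum_{i}\max(d(v_i),d(v_{i+1}))\le 2\sum_i d(v_i)\le 6n$ along a shortest path (each vertex is adjacent to at most three consecutive path vertices), hence a \emph{commute}-time bound of $6n^2$ and only $H^{\rm m}(a,b)\le 6n^2$ --- which is precisely the bound the paper cites from \cite{nonaka2010hitting}; the phrase ``the lazy Metropolis bound, halved'' is an assertion, not a proof. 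With $H^{\rm m}\le 6n^2$ your chain of inequalities yields $M^{\rm v}\le 12n^2$. The fix is already sitting in your own recursions: since $(s_u+s_v)\Phi^{\rm m}(u,v)=2+\sum_w\lambda_{uw}\Phi^{\rm m}(w,v)+\sum_w\lambda_{vw}\Phi^{\rm m}(u,w)$, the function $\tfrac12\Phi^{\rm m}$ satisfies \emph{exactly} the recursion of $M^{\rm v}$ (after inserting the vanishing diagonal terms $M^{\rm v}(u,u)=M^{\rm v}(v,v)=0$ into the sums), so $\tfrac12\Phi^{\rm m}-M^{\rm v}$ is genuinely harmonic off the diagonal and nonnegative on it, giving $M^{\rm v}\le\tfrac12\Phi^{\rm m}\le\max_{a,b}H^{\rm m}(a,b)\le 6n^2$. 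This is what the paper does, and it removes the need for the unproven $3n^2$ hitting-time estimate.
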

\begin{proof}
Fix nodes $x$ and $y$ such that $x \neq y$, and define $$\Lambda_x = \!\!\!\sum_{x_i \in N(x)} \!\!\!\lambda_{x x_i}, ~~ \Lambda_y = \!\!\!\sum_{y_i \in N(y)}\!\!\!\lambda_{y y_i}, ~~ \Lambda_{xy} = \Lambda_x + \Lambda_{y}.$$ An immediate consequence of Remark \ref{rem:transitivity} is the symmetry of $\Phi^m(\cdot, \cdot)$, i.e., $\Phi^{\rm m}(u,v)=\Phi^{\rm m}(v,u), \forall u,v \in V.$ Therefore we can argue that
\begin{small}
\begin{align}\nonumber
&\Phi^{\rm m}(x,y)=\frac{\Lambda_{x}}{\Lambda_{xy}}\Phi^{\rm m}(x,y)+\frac{\Lambda_{y}}{\Lambda_{xy}}\Phi^{\rm m}(y,x)\cr
&\!=\!\frac{\Lambda_{x}}{\Lambda_{xy}}\!\left[\frac{1}{\Lambda_x}\!+\!\!\!\!\!\sum_{x_i\in N(x)}\!\!\!\!\frac{\lambda_{xx_i}}{\Lambda_x}H^{\rm m}(x_i,y)\right]\!\!+\!\frac{\Lambda_x}{\Lambda_{xy}}\big[H^{\rm m}(y,\theta)\!-\!H^{\rm m}(\theta,y)\big]\!\cr
&\!+\!\frac{\Lambda_{y}}{\Lambda_{xy}}\left[\frac{1}{\Lambda_y}\!+\!\!\!\!\!\sum_{y_i\in N(y)}\!\!\!\!\frac{\lambda_{yy_i}}{\Lambda_y}H^{\rm m}(y_i,x)\right]\!\!+\!\frac{\Lambda_y}{\Lambda_{xy}}\big[H^{\rm m}(x,\theta)\!-\!H^{\rm m}(\theta,x)\big],
\end{align}\end{small}where in the second equality we expanded $H^{\rm m}(x,y)$ and $H^{\rm m}(y,x)$.
Simplifying the last relation, we obtain
\begin{footnotesize}
\begin{align}\label{Lemma:Phi-two}
&\Phi^{\rm m}(x,y)=\frac{2}{\Lambda_{xy}}+\!\!\sum_{x_i\in N(x)}\frac{\lambda_{xx_i}}{\Lambda_{xy}}H^{\rm m}(x_i,y)+\!\!\sum_{y_i\in N(y)}\frac{\lambda_{yy_i}}{\Lambda_{xy}}H^{\rm m}(y_i,x)\cr
&\qquad+\!\frac{\Lambda_x}{\Lambda_{xy}}\big(H^{\rm m}(y,\theta)\!-\!H^{\rm m}(\theta,y)\big)\!+\!\frac{\Lambda_y}{\Lambda_{xy}}\big(H^{\rm m}(x,\theta)\!-\!H^{\rm m}(\theta,x)\big),
\end{align}
\end{footnotesize}
Using the definition of $\Phi^{\rm m}(\cdot, \cdot)$, for each $x_i \in N(x)$ we have, $\Phi^{\rm m}(x_i,y)=H^{\rm m}(x_i,y)+H^{\rm m}(y,\theta)-H^{\rm m}(\theta,y).$ Multiplying this relation by $\frac{\lambda_{xx_i}}{\Lambda_{xy}}$ and summing over $x_i\in N(x)$, we obtain
\begin{align}\label{Lemma:Phi-three}
\!\!\!\!\!\!\!\!\!\sum_{x_i\in N(x)}\frac{\lambda_{xx_i}}{\Lambda_{xy}}\Phi^{\rm m}(x_i,y)&=\sum_{x_i\in N(x)}\frac{\lambda_{xx_i}}{\Lambda_{xy}}H^{\rm m}(x_i,y)\cr&+\frac{\Lambda_x}{\Lambda_{xy}}\big[H^{\rm m}(y,\theta)-H^{\rm m}(\theta,y)\big].
\end{align}
By symmetry of $\Phi^{\rm m}$ and using the same argument we get
\begin{align}\label{Lemma:Phi-four}
\!\!\!\!\!\!\!\sum_{y_i\in N(y)}\frac{\lambda_{yy_i}}{\Lambda_{xy}}\Phi^{\rm m}(x,y_i)&=\sum_{y_i\in N(y)}\frac{\lambda_{yy_i}}{\Lambda_{xy}}H^{\rm m}(y_i,x)\cr&+\frac{\Lambda_y}{\Lambda_{xy}}\big[H^{\rm m}(x,\theta)-H^{\rm m}(\theta,x)\big].
\end{align}
Substituting \eqref{Lemma:Phi-four} and \eqref{Lemma:Phi-three} in \eqref{Lemma:Phi-two} yields  \begin{small}
\begin{align}\label{Lemma:phi-harmonic}
\Phi^{\rm m}(x,y)=\frac{2}{\Lambda_{xy}}\!+\!\sum_{x_i\in N(x)}\!\!\frac{\lambda_{xx_i}}{\Lambda_{xy}}\Phi^{\rm m}(x_i,y)\!+\!\sum_{y_i\in N(y)}\!\!\frac{\lambda_{yy_i}}{\Lambda_{xy}}\Phi^{\rm m}(x,y_i).
\end{align} \end{small}On the other hand, we note that the meeting time of the virtual process can be expanded as  
\begin{small}
\begin{equation} \label{virtualmeet}
M^{\rm v}(x,y)=\frac{1}{\Lambda_{xy}}+\!\!\sum_{x_i\in N(x)}\!\!\frac{\lambda_{xx_i}}{\Lambda_{xy}}M^{\rm v}(x_i,y)+\!\!\sum_{y_i\in N(y)}\!\!\frac{\lambda_{yy_i}}{\Lambda_{xy}}M^{\rm v}(x,y_i).
\end{equation} 
\end{small}We therefore see that $\frac{1}{2}\Phi^{\rm m}(x,y)$ and $M^{\rm v}(x,y)$ follow the same recursion formula when $x \neq y$. Thus defining $f(x,y)=\frac{1}{2}\Phi^{\rm m}(x,y)-M^{\rm v}(x,y)$ we have $\forall x\neq y \in V$
\[ f(x,y) = \sum_{x_i \in N(x)} \frac{\lambda_{x x_i}}{\Lambda_{xy}} f(x_i,y) + \sum_{y_i \in N(y)} \frac{\lambda_{y y_i}}{\Lambda_{xy}} f(x, y_i).\] 
Defining the stochastic irreducible matrix $\mathcal{Q}$ as
\begin{align}\nonumber
\mathcal{Q}\big((x,y),(r,s)\big) = \begin{cases} \frac{\lambda_{xr}}{\Lambda_{xy}}, & \mbox{if} \ s=y,\ r\in N(x)  \\
\frac{\lambda_{yw}}{\Lambda_{xy}}, & \mbox{if} \ r=x, \ w\in N(y) \\
0, & \mbox{ Else},
\end{cases}
\end{align} 
one can see that $f(x,y)$ is harmonic for the matrix $\mathcal{Q}$ over $V\times V\setminus \{ (x,x) ~|~ x \in V\}$ and, since $\theta$ is a hidden vertex, we have that for all $x \in V$,  $f(x,x)=\frac{1}{2}(H^{\rm m}(x,\theta)-H^{\rm m}(\theta,x))\ge 0$. Therefore, using Remark \ref{rem:harmonic-equality}, we immediately get $f(x,y) \ge 0$ for all $x,y \in V$, and therefore $M^{\rm v}(x,y)\leq \frac{1}{2}\Phi^{\rm m}(x,y)$. Finally, it was shown in \cite{nonaka2010hitting} that $H^{m}(x,y) \leq 6 n^2$ for all $x,y \in V$, which now immediately implies the current lemma. 
\end{proof}

Our next lemma shows that $\max_{x,y} M^{\rm o}(x,y)$ is essentially of the same order as $\max_{x,y} M^{\rm v}(x,y)$.

\begin{lemma} \label{orig-lemma}
$\max_{x,y} M^{\rm o}(x,y) \leq 12 n^2$.
\end{lemma}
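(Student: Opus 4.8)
The plan is to recover the law of the original process from that of the virtual process by running the virtual process \emph{past} each of its meeting events and accepting that event only with probability $\tfrac12$; Lemma~\ref{thm:meeting-hitting-harmonic} then delivers the bound almost immediately. One has to be slightly careful about the direction of the comparison: the virtual process meets \emph{at least as fast} as the original one, since it retains every meeting opportunity of the original plus extra ones, so a plain coupling would give the inequality the wrong way. The real content is that each rejected meeting costs, in expectation, only one more virtual meeting time, i.e.\ at most $6n^{2}$.

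First I would fix $x\neq y$ (the case $x=y$ being trivial) and describe the coupling. Whenever the virtual walkers occupy adjacent nodes $u,v$, their connecting edge fires at rate $2\lambda_{uv}$; realize this Poisson clock as the superposition of two independent rate-$\lambda_{uv}$ clocks, and designate the first one the \emph{acceptance} clock. Drive the two walkers of both processes with the same non-connecting edge clocks, so that the walkers move identically in the two processes. Declare the virtual process to have met whenever either half of a connecting clock fires while the walkers are adjacent, and declare the original process to have met only when an acceptance clock fires while the walkers are adjacent; any other connecting-clock firing leaves the original walkers in place. Because the walker motions coincide and the effective meeting rate of the original process at an adjacent pair $\{u,v\}$ is exactly $\lambda_{uv}$, this construction realizes the law of the original process, so the meeting time it produces from $(x,y)$ has mean $M^{\rm o}(x,y)$.

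Next I would run the virtual process \emph{without} stopping, and let $\tau_{1}<\tau_{2}<\cdots$ be the successive times at which the connecting clock of the currently-adjacent walkers fires. Each $\tau_{i}$ finds the walkers at some adjacent pair, so by the strong Markov property $\mathbb{E}[\tau_{1}\mid \text{start at }(x,y)]=M^{\rm v}(x,y)$ and $\mathbb{E}[\tau_{i+1}-\tau_{i}]=\mathbb{E}[M^{\rm v}(\text{pair at }\tau_{i})]\le 6n^{2}$ for $i\ge1$ by Lemma~\ref{thm:meeting-hitting-harmonic}. Independently of all the clocks, each $\tau_{i}$ is an acceptance-clock firing with probability $\tfrac12$, so the original process meets at $\tau_{K}$, where $K$ is geometric with $\mathbb{P}(K>i)=2^{-i}$; moreover $\{K>i\}$ is determined by the first $i$ of these fair coins and is therefore independent of $\tau_{i+1}-\tau_{i}$. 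Hence $\mathbb{E}[\tau_{K}]=\mathbb{E}[\tau_{1}]+\sum_{i\ge1}\mathbb{P}(K>i)\,\mathbb{E}[\tau_{i+1}-\tau_{i}]$, which gives
\[
M^{\rm o}(x,y)\ \le\ M^{\rm v}(x,y)+\sum_{i\ge1}2^{-i}\cdot 6n^{2}\ =\ M^{\rm v}(x,y)+6n^{2}\ \le\ 12n^{2}.
\]

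The step I expect to be the main obstacle is justifying carefully that the rejection construction really has the law of the original process — one must match the transition rates in every configuration, including the fact that the walkers simply rest in place after a rejected meeting — together with the small measurability point (independence of $\{K>i\}$ from $\tau_{i+1}-\tau_{i}$) used in the last computation; the rest is routine.
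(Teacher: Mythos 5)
Your proof is correct and uses essentially the same idea as the paper: couple the two processes, split the rate-$2\lambda_{uw}$ virtual edge into two independent rate-$\lambda_{uw}$ clocks, and treat the ``real'' half as the original process's meeting edge, so each virtual meeting is an original meeting with probability $\tfrac12$. The only difference is bookkeeping --- you unroll the rejections into a geometric series giving $M^{\rm o}(x,y)\le M^{\rm v}(x,y)+\max_{u,w}M^{\rm v}(u,w)$, whereas the paper conditions on the first split-clock firing and solves the one-step recursion $M^{\rm o}(x,y)\le M^{\rm v}(x,y)+\tfrac12\max_{u,w}M^{\rm o}(u,w)$; both yield $12n^2$.
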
 
\begin{proof} 
Fix $x\neq y$ and initialize both the original process and the virtual process with initial positions of the walkers being $x$ and $y$.
Furthermore, let us couple these processes to move identically until the first time when the walkers are each other's neighbors. At this time assuming that the walkers are at nodes $u$ and $w$, let us split the edge connecting them of
rate $2 \lambda_{uw}$ in the virtual process into two distinct edges of rate $\lambda_{uw}$ (Figure \ref{fig:virtual-original}). We will make the first of these (the bottom black edge in the right side of Figure \ref{fig:virtual-original}), as well as the edges going from $u$ and $w$ to their neighbors, register arrivals in the virtual process exactly when the corresponding edges in the original process register arrivals.
\begin{figure}[htb]
\vspace{-1.7cm}
\begin{center}
\includegraphics[totalheight=.15\textheight,
width=.25\textwidth,viewport=400 0 1300 900]{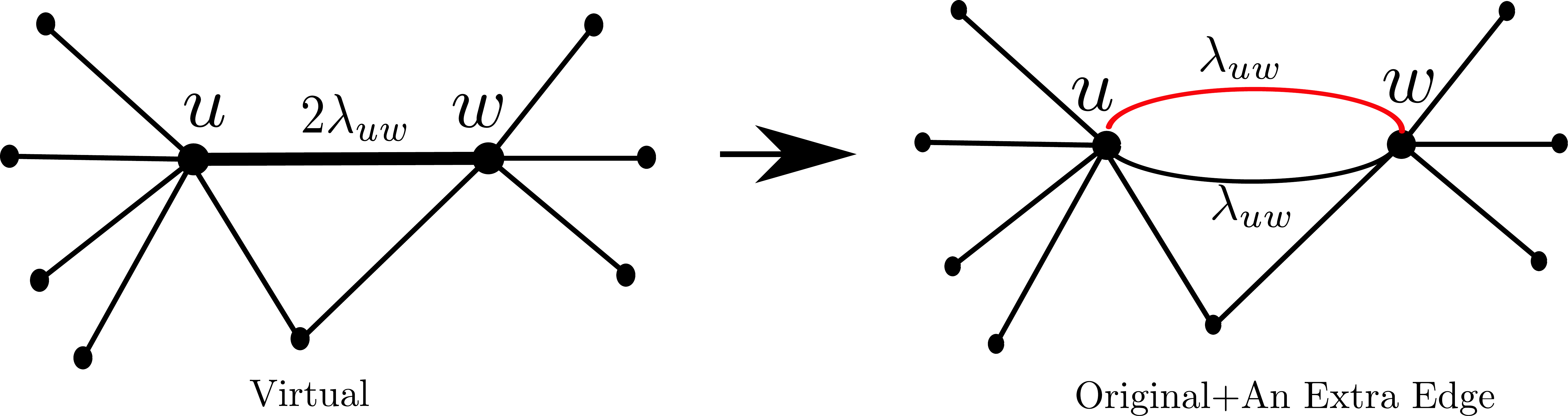} \hspace{0.4in}
\end{center}
\vspace{-0.3cm}
\caption{Interpretation of the virtual process using the original process and an extra edge.}
\label{fig:virtual-original}
\vspace{-0.2cm}
\end{figure}
Now, for these coupled processes, let us denote by $\tau^{\rm v}(x,y)$ a random variable denoting the first time when the random walkers in the virtual processes meet and $\tau^{\rm o}(x,y)$ the first time when the random walkers in the original process meet, given that the processes have been initialized from $x$ and $y$. We have that $\tau^{o}(x,y) = \tau^{\rm v}(x,y)$ if the first edge of rate $\lambda_{uw}$ registers an arrival before the second edge of rate $\lambda_{uw}$. This happens with probability $\frac{1}{2}$. If this does not happen, $\tau^{\rm o}(x,y)$ is $\tau^{\rm v}(x,y)$ plus another random variable which is a meeting time in the original process starting from $u$ and $w$, i.e., $\tau^{\rm o}(u,w)$. In other words, 
\begin{align}\nonumber
\tau^{\rm o}(x,y) = \begin{cases}
\tau^{\rm v}(x,y), &  \!\!\! \mbox{w.p}.\  \frac{1}{2} (\mbox{black edge ticks})\\
\tau^{\rm v}(x,y)+\tau^{\rm o}(u,w) &  \!\!\! \mbox{w.p}.\ \frac{1}{2} (\mbox{red edge ticks}).
\end{cases}
\end{align}
Taking expectation of the above relation, we obtain
\[ M^{\rm o}(x,y) \leq \frac{1}{2} M^{\rm v}(x,y) + \frac{1}{2} ( M^{\rm v}(x,y) + \max_{u,w} M^{\rm o}(u,w) ) \]
which implies that $\max_{x,y} M^{\rm o}(x,y) \leq 2 \max_{x,y} M^{\rm v}(x,y)$ which by Lemma \ref{thm:meeting-hitting-harmonic} is at most $12n^2$.
\end{proof}

\begin{remark} 
In view of the previous lemma, we have
\begin{equation} \label{expbound} \mathbb{P}(\tau^{\rm o}(x,y) > t) \leq e^{ - \lfloor \frac{t}{12en^2} \rfloor}.
\end{equation}This can be seen by the same argument as in Chapter 2.4.3 of \cite{aldous2002reversible} whose argument applies verbatim here.
\end{remark}

\begin{theorem}\label{thm:main-static-convergence-time}
The expected time until the quantized Metropolis dynamics reaches a consensus set is $O(n^2\log n)$.
\end{theorem}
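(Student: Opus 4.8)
The plan is a Lyapunov argument driven by the disagreement potential
\[ \Phi(t)=\sum_{i=1}^{n}\big(x_i(t)-\bar x(0)\big)^2. \]
Two elementary facts single out $\Phi$. First, $\Phi$ is non-increasing along the dynamics: a trivial update (a no-op or a swap) leaves the multiset $\{x_i(t)\}_i$ unchanged, while a non-trivial update on an edge $\{i,j\}$ with $x_i(t)-x_j(t)=d\ge 2$ replaces $(x_i,x_j)$ by $(x_i-1,x_j+1)$ and so decreases $\Phi$ by exactly $2(d-1)\ge 2$. Second, if $x\notin\mathcal{C}$ then (since $\bar x(t)\equiv\bar x(0)$) some coordinate lies outside $\{\lfloor\bar x(0)\rfloor,\lfloor\bar x(0)\rfloor+1\}$ and hence contributes at least $1$ to $\Phi$; thus $\Phi(x)\ge 1$ whenever $x\notin\mathcal{C}$. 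Consequently it is enough to drive $\Phi$ below $1$, which happens after at most $\Phi(0)/2\le n(L-l)^2/2$ non-trivial updates. Merely multiplying this count by the worst-case waiting time $\overline{T}\le 12n^2$ (from the preceding Proposition and Lemma~\ref{orig-lemma}) is too lossy—it is cubic in $n$—because $\overline{T}\le 12n^2$ is saturated only near the consensus scale; when $\Phi$ is large the configuration is spread out, many non-trivial updates are available, and they occur in rapid succession.

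The quantitative core is an epochwise contraction estimate: \emph{there is an absolute constant $C$ so that, from any configuration with $\Phi=V\ge 1$, the expected time until $\Phi$ has dropped to at most $V/2$, or $\mathcal{C}$ has been reached, is at most $Cn^2$.} Given this, partition the trajectory into the dyadic epochs $\Phi\in(V_0/2^{k+1},V_0/2^{k}]$, $k=0,1,\dots$, with $V_0=\Phi(0)$. Taking the initial spread $L-l$ to be at most polynomial in $n$ (as is standard—indeed it is usually a fixed constant), we have $\log_2 V_0=O(\log n)$, so there are $O(\log n)$ epochs, each of expected length $\le Cn^2$, giving expected time $O(n^2\log n)$ to reach $\mathcal{C}$. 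The possibility that some epoch runs far beyond its mean is absorbed using the tail estimate~\eqref{expbound}: from any non-consensus state the probability that no non-trivial update occurs within time $t$ is at most $e^{-\lfloor t/(12en^2)\rfloor}$, so each epoch length has a sub-exponential tail on the scale $n^2$, the total running time has a tail decaying geometrically on the scale $n^2\log n$, and the rare long epochs contribute only $O(n^2\log n)$ to the expectation.

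The main obstacle is the contraction estimate itself. The mechanism I would use: by the Proposition and Lemma~\ref{orig-lemma}, within expected time $12n^2$ a non-trivial update occurs, say on $\{u,w\}$ with $x_u-x_w\ge 2$; afterwards $u,w$ sit at values that differ from the surrounding consensus level, leaving (immediately, or after $O(1)$ further swaps, each now firing at rate $\Theta(1)$) an edge whose endpoints still differ by at least $2$. This seeds a \emph{cascade} of non-trivial updates taking place on the $O(1)$ time scale rather than the $n^2$ one, which flattens the local excess and deficit and thereby drains a constant fraction of $\Phi$. One checks the two extreme shapes: a single spike of height $M$ drains, at total incident rate $O(1)$, in $O(M)=O(\sqrt V)$ time (lower order overall for polynomial ranges), leaving $\Theta(M)$ unit excesses; and a shallow profile of many $\pm1$'s whose blocks, once brought into contact by the seeding meeting, erode into one another at rate $\Theta(1)$ per unit. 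Making this rigorous—bounding the number of ``slow'' seeding meetings by $O(\log V)$ by verifying that each seeding-plus-cascade kills a constant fraction of $\Phi$, and controlling graphs on which a cascade merely relocates a clustered residue—is the delicate part; the available tools are exactly those assembled above: the two-walker meeting-time comparison of Lemma~\ref{thm:meeting-hitting-harmonic} through the hidden vertex $\theta$, the uniform bound $H^{\rm m}\le 6n^2$ of~\cite{nonaka2010hitting}, and the exponential tail~\eqref{expbound}.
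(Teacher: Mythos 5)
There is a genuine gap. Your entire bound rests on the ``epochwise contraction estimate''---that from any configuration with $\Phi=V\ge 1$ the expected time for $\Phi$ to halve is $O(n^2)$---and this is exactly the step you do not prove; you acknowledge as much (``the delicate part''). The cascade heuristic offered in its support does not hold in general. After a non-trivial update on $\{u,w\}$ with $x_u-x_w=2$, the two endpoints land on the \emph{same} intermediate value, which enables no further non-trivial update locally; in a block configuration (say a path with $n/2$ nodes at $0$ followed by $n/2$ at $2$) the newly created intermediate values accumulate at the interface and buffer the extreme blocks apart, so successive non-trivial updates are each gated by a fresh meeting of a $0$-carrying walker with a $2$-carrying walker rather than firing in an $O(1)$-time cascade. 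Halving $\Phi$ there requires $\Theta(n)$ such annihilations, so the $Cn^2$ claim would need a genuinely new argument about many walker pairs meeting in parallel---none of the tools you list (the single-pair meeting bound of Lemma~\ref{orig-lemma}, the tail~\eqref{expbound}) delivers that, and a naive count gives only the cubic bound you yourself reject.

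The paper avoids this entirely by working with the range $V(t)=\max_i x_i(t)-\min_i x_i(t)$ instead of the quadratic potential. The key observation is that the extremal values $\max_i x_i(t)$ and $\min_i x_i(t)$ are never created, only transported, so they can be traced backwards as walkers in the original process; if $V(t')=V(t)$ then some pair $(i,j)$ with $i\in S_{\max}(t)$, $j\in S_{\min}(t)$ has not yet met in $[t,t']$. A union bound over the at most $n^2$ pairs together with~\eqref{expbound} gives $\mathbb{P}(C>t)\le\min\{1,n^2e^{1-t/(12en^2)}\}$, and integrating yields expected time $O(n^2\log n)$ for $V$ to drop by $1$; here the $\log n$ comes from the union bound over pairs, not from dyadic epochs, and the number of drops needed is $O(1)$ since the initial range $L-l$ is treated as fixed. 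If you want to salvage your route, the statement you actually need to prove is that in time $O(n^2\log n)$ \emph{all} $n^2$ designated walker pairs have met---which is precisely the paper's union-bound argument in disguise---so you may as well adopt the range Lyapunov function directly.
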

\begin{proof} 
Consider the time it takes for the Lyapunov function $V(t) = \max_i x_i(t) - \min_i x_i(t)$ to shrink by at least 1 starting from any non-consensus configuration. We show that this time is $O(n^2 \log n)$, which proves the theorem.

Indeed, let $S_{\rm max}(t) = \arg \max_i x_i(t)$, $S_{\rm min}(t) = \arg \min_i x_i(t)$. Now let us consider the event that $V(t')=V(t)$ for $t' > t$. We claim that there must exist $i \in S_{\rm max}(t), j \in S_{\rm min}(t)$ such that two random walkers in the original process starting from nodes $i$ and $j$ have not yet met during the time interval $[t, t']$. To see this, we note that during the executions of the quantized Metropolis dynamics no new values $\max_i x_i(t)$ and $\min_i x_i(t)$ will be created; these values only travel from one node to the other over the network. Now given that at time $t'$ we have $V(t')=V(t)$, there must exist two nodes $i'$, $j'$ such that $x_{i'}(t')=\max_i x_i(t)$ and $x_{j'}(t')=\min_i x_i(t)$. By tracking back the origin of the random walks which bring the values $\max_i x_i(t)$ and $\min_i x_i(t)$ at time $t$ to the nodes $i'$ and $j'$ at time $t'$, one can see that these walks have never meet, and they must have originated from some nodes $i \in S_{\rm max}(t), j \in S_{\rm min}(t)$. Therefore, the probability that $V(t')=V(t)$ for $t' > t$ is upper bounded by the probability that there exist some $i \in S_{\rm max}(t), j \in S_{\rm min}(t)$ such that two random walkers in the original process starting from nodes $i$ and $j$ have not yet met. Thus letting $C$ being the time elapsed until $V(t)$ shrinks by 1, we have that 
\begin{align}\nonumber
\mathbb{E}(C)&\!=\!\int_{0}^{\infty}\!\!\!\mathbb{P}(C\!>\!t)dt\!\leq\! \int_{0}^{\infty}\!\!\!\!\min\left\{1,\sum_{x,y}\mathbb{P}(\tau^{\rm o}(x,y)>t)\!\!\right\}dt\cr
&\leq \int_{0}^{\infty}\min\left\{1,n^2e^{(1-\frac{t}{12en^2})}\right\}dt=O\left(n^2 \log n\right),
\end{align}where the second inequality is due to Eq. (\ref{expbound}) and the last equality follows from $\int_{0}^{\infty}\min\left\{1,Ae^{-at}\right\}dt=\frac{1+\log A}{a},$ which holds when $A \geq 1$, and is from \cite{aldous2002reversible}, Ch. 5.3.2. 
\end{proof}

\section{Convergence Time over Dynamic Networks}\label{sec:time-varying}

In this section we analyze the expected convergence time of the quantized Metropolis dynamics over time-varying networks and prove our main result, namely that this convergence time is essentially quadratic in the number of nodes. 

Toward this aim, let us consider a sequence of connected undirected networks $\mathcal{G}(t)=(V, \mathcal{E}(t)), t\ge 0$, over the same set of vertices $V$ which may change at discrete time instances, i.e., $\mathcal{G}(t)=\mathcal{G}(\floor{t}), \forall t\ge 0$. Thus our notation for the set of neighbors of $x$ will now be $N_x(t)$, which depends on $t$. In contrast to the previous section, we will now assume that each node always possesses a self-loop, i.e., $x \in N_x(t)$ for all $x \in V, t \geq 0$. We thus introduce the notation $N_x'(t)$ to denote the neighbors of node $x$ excluding itself: $N_x'(t) = N_x(t) \setminus \{x\}$. The degree of a node $x$, denoted by $d_x(t)$, will be referred to as the cardinality of $N_x'(t)$.

\subsection{Quantized Metropolis Model over Time-Varying Networks}\label{subsec:Quantized-Metropolis-Time-Varying}

Given a network $\mathcal{G}(t)$ at time $t\ge 0$, we associate a Poisson process with each edge $\{i,j\}$ of $\mathcal{G}(t)$ with a rate of $\lambda_{ij}(t) = 1/\max(d_i(t), d_j(t))$, i.e., the Metropolis weight corresponding to that edge at time $t$. When an edge $\{i,j\} \in \mathcal{E}(t)$ registers an arrival, we let the incident nodes update their values based on \eqref{eq:Quantize-rule}. Moreover, for each node $x$ and time $t$, the self-loop $(x,x)$ registers arrivals according to a Poisson process with rate of
\begin{align}\label{eq:self-loop-rate}
\lambda_{xx}(t)=1-\frac{1}{2}\sum_{x_i\in N_x'(t)}\lambda_{xx_i}(t).
\end{align}
Note that $\lambda_{xx}(t) \geq 0$ for all $x \in V, t \geq 0$, and the sum of the rates of the edges of $\mathcal{G}(t)$ is always equal to $|V|=n$. When a self-loop registers an arrival, no update is made.

\subsection{Preliminary Definitions and Relevant Results}
We next introduce the notions of the ``original'' and ``virtual'' processes over the time-varying graph sequence $\{\mathcal{G}(t)\}$.

\begin{definition} 
Consider two random walkers moving based on whether the activated edge in the quantized Metropolis location is incident to them or not. That is, if a random walker is at node $i$ at time $t$ when the edge $\{i,j\} \in \mathcal{G}(t)$ registers an arrival, the walker moves to node $j$. As before, we will refer to this as the original process, but note that the graph sequence is now time-varying. The meeting time $M_t^{\rm o}(x,y)$ for $x \neq y$ is the expected time until two random walkers, starting at locations $x$ and $y$ at time $t$, are both incident on an edge which registers an arrival. By convention we set $M_t^{\rm o}(x,x) = 0$ for all $x \in V, t \geq 0$.
\end{definition}

\begin{definition}
We define the virtual process to be identical to the original process except when the two walkers in the original process are each other's neighbors at nodes $u$ and $w$ for some $t\ge 0$, i.e., $u\in N_w(t)$. At this time in the virtual process we let the edges $\{r,s\}\in \mathcal{E}(t)$ register arrivals at rates $\mu_{rs}(t)$ defined as follows.  \begin{align}\label{eq:virtual-rates}
\mu_{rs}(t) = \begin{cases}
2\lambda_{uw}(t), & \mbox{if}  \ \{r,s\}=\{u,w\} \\
\lambda_{uu}(t)-\frac{\lambda_{uw}(t)}{2}, & \mbox{if}  \ \{r,s\}=\{u,u\}\\
\lambda_{ww}(t)-\frac{\lambda_{uw}(t)}{2}, & \mbox{if}  \ \{r,s\}=\{w,w\}\\
\lambda_{rs}(t), & \mbox{if}  \ \mbox{else}.\\
\end{cases}
\end{align} Note that $\mu_{rs}(t) \geq 0$ for all $r,s \in V, t \geq 0$. We refer to the meeting time of the virtual process starting from $x$ and $y$ at time $t$ as $\tau_t^{\rm v}(x,y)$ and its expectation $M_t^{\rm v}(x,y)$. By convention, we set $M^{\rm v}_t(x,x) = 0$ for all $x \in V, t \geq 0$. 

In words, the rate of the edge $\{u,w\}$ is doubled, while the rate of the self-loops is decreased to make sure that all rates still sum up to $n$.
\end{definition}

\subsection{Convergence Rate over Time-Varying Networks}
In this part, we state our main results for the quantized Metropolis dynamics over time-varying networks. Define $\Lambda_{xy}(t)$ to be
\begin{align}\label{eq:Lambda-x-y-expression}
\Lambda_{xy}(t)&\!=\!\!\!\!\!\sum_{x_i \in N_x'(t)}\!\!\!\!\!\lambda_{xx_i}(t)+\!\!\!\!\!\sum_{y_i \in N_y'(t)}\!\!\!\!\!\lambda_{yy_i}(t)\!=\!2\left(2\!-\!\lambda_{xx}(t)\!-\!\lambda_{yy}(t)\right).
\end{align}
Fix $t$ and let $T_1$ be the first time after $t$ that an edge registers an arrival in the virtual process. We then have

\begin{small}
\begin{align}\label{eq:expected-meeting-conditional}
&\!\!\!\!\!\!\mathbb{E} \left[\tau_t^{\rm v}(x,y)|T_1\right]=\left(T_1-t\right)+\left(1-\frac{\Lambda_{xy}(T_1)}{n}\right)M_{T_1}^{\rm v}(x,y)\cr
&\!+\!\!\!\!\!\!\!\!\sum_{x_i\in N_x'(T_1)}\!\!\!\!\!\!\frac{\lambda_{xx_i}(T_1)}{n}M_{T_1}^{\rm v}(x_i,y)+\!\!\!\!\!\!\sum_{y_i\in N_y'(T_1)}\!\!\!\!\!\!\frac{\lambda_{yy_i}(T_1)}{n}M_{T_1}^{\rm v}(x,y_i).
\end{align}\end{small}

Note that this equation holds regardless of whether $x$ and $y$ are neighbors, and this fact is the reason why we introduced the virtual
process in the first place. Next the following lemma shows that $T_1$ follows an exponential distribution with parameter $n$.

\begin{lemma}
Let us consider a process obtained by restarting a Poisson process of rate $n$ at every discrete time instant $k = 0,1,2,....$. The first time arrival $T_1$ of the new process thus generated will follow an exponential distribution with parameter $n$.
\end{lemma}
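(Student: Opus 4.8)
The plan is to compute the survival function $\mathbb{P}(T_1 > t)$ directly and recognize it as that of an exponential random variable with parameter $n$. First I would make precise what "restarting at every discrete instant" means: at each integer time $k = 0,1,2,\ldots$ a fresh, independent rate-$n$ Poisson process is launched, and $T_1$ is the first arrival of whichever of these processes is currently active. Then I would fix $t \ge 0$, write $t = k + s$ with $k = \lfloor t \rfloor$ and $s = t - k \in [0,1)$, and note that the event $\{T_1 > t\}$ is exactly the event that the process launched at time $j$ has no arrival during $[j, j+1)$ for each $j = 0,1,\ldots,k-1$, and the process launched at time $k$ has no arrival during $[k, k+s)$.

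Next I would invoke the two defining properties of a rate-$n$ Poisson process. On a full unit interval the probability of no arrival is $e^{-n}$, and on the partial interval of length $s$ it is $e^{-ns}$; moreover, since these events concern the independent realizations of the distinct fresh processes launched at $0,1,\ldots,k$, they are mutually independent. Multiplying gives $\mathbb{P}(T_1 > t) = (e^{-n})^{k} \cdot e^{-ns} = e^{-n(k+s)} = e^{-nt}$, which is precisely the survival function of $\mathrm{Exp}(n)$; as $t$ was arbitrary, $T_1 \sim \mathrm{Exp}(n)$. An equivalent and slightly slicker route is induction on $\lfloor t \rfloor$: for $t \in [0,1)$ no restart has yet occurred, so $\mathbb{P}(T_1 > t) = e^{-nt}$ directly, and the restart at integer times together with $\mathbb{P}(T_1 > 1) = e^{-n}$ propagates the identity from $[k-1,k)$ to $[k,k+1)$, mirroring the memorylessness relation $\mathbb{P}(X > 1 + s) = \mathbb{P}(X > 1)\,\mathbb{P}(X > s)$ for $X \sim \mathrm{Exp}(n)$.

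There is no genuinely hard step here; the lemma is essentially a restatement of the memorylessness of the exponential distribution (equivalently, the independence and stationarity of Poisson increments) in a piecewise form. The only point that deserves a little care is bookkeeping — stating the "restart" mechanism unambiguously and correctly accounting for the final partial interval of length $s$ — so that the reader sees that nothing is actually being discarded or double-counted when the process is reset at each integer time.
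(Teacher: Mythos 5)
Your proof is correct and follows the same route the paper indicates: the paper's proof is just the one-line remark that one computes the distribution function of $T_1$ and checks $\mathbb{P}(T_1 \leq t) = 1 - e^{-nt}$ for all $t$, which is exactly what you do (via the survival function), with the restart mechanism and the split $t = \lfloor t \rfloor + s$ spelled out. No discrepancy; you have simply supplied the bookkeeping the paper leaves implicit.
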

\begin{proof}
The proof follows by computing the cumulative function of $T_1$ and showing that $\mathbb{P}(T_1 \leq t) = 1-e^{-nt}, \forall t$.
\end{proof}

Continuing with \eqref{eq:expected-meeting-conditional} and since $T_1$ is exponential with parameter $n$, we further have that 
\begin{small}
\begin{align}\label{eq:integral-recursion}
M_t^{\rm v}(x,y)&=\int_{t}^{\infty}ne^{-n(t_1-t)}\mathbb{E}\left[\tau_{t}^{\rm v}(x,y)|T_1=t_1\right]dt_1\cr
&=\frac{1}{n}+\!\!\int_{t}^{\infty}\!\!\!\!ne^{-n(t_1-t)}\left(1-\frac{\Lambda_{xy}(t_1)}{n}\right)M_{t_1}^{\rm v}(x,y)dt_1\cr
&+\int_{t}^{\infty}\!\!\!\!\!\!\sum_{x_i\in N_x'(t_1)}\!\!\!\!\!\!ne^{-n(t_1-t)}\frac{\lambda_{xx_i}(t_1)}{n}M_{t_1}^{\rm v}(x_i,y)dt_1\cr
&+\int_{t}^{\infty}\!\!\!\!\!\!\sum_{y_i\in N_y'(t_1)}\!\!\!\!\!\!ne^{-n(t_1-t)}\frac{\lambda_{yy_i}(t_1)}{n}M_{t_1}^{\rm v}(x,y_i)dt_1.
\end{align}    
\end{small}

Now let us define $v(t)$ to be a column
vector of length $n(n-1)$ whose entries are the variables
$M_t^{\rm v}(x,y), \forall x\neq y$ in any order. It follows that we can write the above recursion for $M_t^{\rm v}(x,y)$ in
matrix form as
\begin{align}\label{eq:vector-form-v-without-change-variable}
v(t)&=\frac{1}{n}{\bf 1}\!+\!\int_{t}^{\infty}ne^{-n(t_1-t)}D\left(t_1\right)v\left(t_1\right)dt_1,
\end{align}
where $D(t) \in \mathbb{R}^{n(n-1) \times n(n-1)}$ is a matrix whose rows and columns we will index by $(x,y), x \neq y$ as 

\begin{align}\label{eq:D-matrix}
D_{(x,y)(r,s)}(t) = \begin{cases}
\frac{\lambda_{rx}(t)}{n}, & \!\!\mbox{if}  \ s=y, r\in N_x(t)\setminus\{ x,y \} \\
\frac{\lambda_{sy}(t)}{n}, & \!\!\mbox{if}  \ r=x, s\in N_y(t)\setminus\{ x,y \} \\
1-\frac{\Lambda_{xy}(t)}{n}, & \!\!\mbox{if}  \ (r,s)=(x,y) \\
0, &  \ \mbox{Else}.\\
\end{cases}
\end{align}

We change variables in \eqref{eq:vector-form-v-without-change-variable} from $t_1$ to $z=-e^{-n(t_1-t)}$ and obtain

\begin{align}\label{eq:vector-form-v}
v(t)&=\frac{1}{n}{\bf 1}\!+\!\int_{-1}^{0}\!\!D\left(t\!-\!\frac{\ln(-z)}{n}\right)v\left(t\!-\!\frac{\ln(-z)}{n}\right)dz.
\end{align}

We justify this change of variables by appealing to Theorem 263I in \cite{fremlinmeasure}. Indeed, the equivalence of Eq. (\ref{eq:vector-form-v}) and Eq. (\ref{eq:vector-form-v-without-change-variable}) is an instance of the equality $\int_{I'} g  = \int_{I} g(\phi(z)) \phi'(z)$. Here $\phi(z) = -e^{-n (z - t)}$ while $g(z) = D(t - \ln(-z)/n) v(t - \ln(-z)/n)$. Theorem 263I in \cite{fremlinmeasure} allows us to assert this equality subject to (i) $g(\cdot)$ being Lebesgue measurable (ii); and $\phi(z)$ being absolutely continuous on any closed bounded subinterval of $I$. Item (ii) is clearly satisfied here. Item (i) follows because $D(\cdot)$ is a piecewise continuous function by definition, and $v(\cdot)$ is a continuous function (indeed, taking $t_n \rightarrow t$ and conditioning on no transitions in $[t_n,t]$ as well as no meeting occurring in the same interval, we immediately obtain the continuity of $v(t)$ as a function of $t$. 

Next, we note that it is immediate from Eq.  \eqref{eq:D-matrix} that $D(t)$ is a symmetric matrix. Furthermore, it is easy to see that $D(t)$ is sub-stochastic,
implying that its eigenvalues are all real and less than $1$ in modulus. Our next step is to upper bound both the largest and smallest eigenvalues of
$D(t)$. 

The first step is to extend $D(t)$ to a stochastic matrix  $P \in \R^{n^2 \times n^2}$ as follows:
\begin{align}\label{eq:P-matrix}
P_{(x,y)(r,s)}(t)\!=\! \begin{cases} 1, \!\!& \mbox{if} \ x=y=r=s \\
\frac{\lambda_{rx}(t)}{n}, & \mbox{if}   \ x \neq y, s=y, r\in\! N_x(t)\!\setminus\!\{x\} \\
\frac{\lambda_{sy}(t)}{n}, & \mbox{if}  \ x \neq y, r=x, s\in\! N_y(t)\!\setminus\!\{y\} \\
1\!-\!\frac{\Lambda_{xy}(t)}{n}, & \mbox{if}  \ x \neq y, (r,s)=(x,y),  \\
0, & \mbox{if}  \ \mbox{else}.\\
\end{cases}
\end{align}

Note that since $P(t)$ is stochastic, it can be interpreted as the transition matrix of a Markov chain with absorbing states $S = \{(x, x): x \in V \}$. When we adopt the convention that states in $S$ correspond to the first $n$ rows of $P(t)$, we have that the
matrix $P(t)$ can be represented as

\begin{align}\label{eq:block-representation-C-D-P}
P(t)=\left( \begin{array}{cc}
I_{n\times n} & 0 \\
C(t) & D(t) \end{array} \right),
\end{align}
where $C(t)$ and $D(t)$ are, respectively, matrices of sizes $(n^2-n)\times n$ and $(n^2-n)\times(n^2-n)$.
 
\begin{lemma}\label{lemm:D-eigenvalue-upperbound} Consider a Markov chain with transition matrix $Q =\left( \begin{array}{cc}
I_{k \times k} & 0 \\
C & D \end{array} \right)$, which has the additional property that there is a path starting from any node and ending in the first
$k$ nodes. Furthermore, let us assume that $D$ is symmetric and let us denote its largest eigenvalue by
$\lambda_{\rm max}(D)$. Let $H_i$ denote the expected time until the chain is absorbed in $\{1, \ldots, k\}$ starting at node
$i$ and $H = \max_i H_i$. Then
\begin{align}\nonumber
\lambda_{\max}(D) \leq 1-\frac{1}{H}.
\end{align}
\end{lemma}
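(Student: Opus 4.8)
The plan is to exploit the probabilistic meaning of the matrix $D$. Since $D$ is symmetric and sub-stochastic, $\lambda_{\max}(D) = \max_{\|u\|=1} u'Du$, and because all entries of $D$ are nonnegative the maximizing $u$ can be taken nonnegative (replacing each entry of a maximizer by its absolute value can only increase $u'Du$ while preserving the norm). So it suffices to show $u'Du \le (1 - 1/H)\|u\|^2$ for every nonnegative vector $u$ indexed by the non-absorbing states $\{k+1,\dots\}$. First I would record that $D$ governs the sub-Markov chain obtained from $Q$ by killing the walk upon absorption in $\{1,\dots,k\}$: for a non-absorbing state $i$, $(D\mathbf{1})_i = \sum_{j \text{ non-abs}} Q_{ij} = 1 - (\text{prob.\ of stepping into }\{1,\dots,k\}\text{ from }i)$, and iterating, $(D^m \mathbf{1})_i = \mathbb{P}_i(\text{not absorbed after } m \text{ steps})$. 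Hence $\sum_{m\ge 0}(D^m\mathbf{1})_i = \mathbb{E}_i[\text{absorption time}] = H_i \le H$.

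Next I would pass from the "expected hitting time" bound to a spectral bound. Let $u \ge 0$ with $\|u\|=1$ and set $\beta = u'Du = \lambda$ (taking $u$ to be the Perron-type maximizer, so $Du = \beta u$ holds if $D$ is irreducible on the non-absorbing block; if not, one works on an appropriate irreducible component, or simply argues via the inequality below without eigenvector structure). The cleanest route avoiding irreducibility subtleties: for any nonnegative $u$, by Cauchy–Schwarz applied entrywise with the probabilistic identity $\sum_m (D^m\mathbf 1)_i \le H$, one shows $\|u\|^2 = \sum_i u_i^2 \le \frac{1}{?}\dots$ — more directly, use that for the symmetric sub-stochastic $D$, $u' D^m u \ge (u'Du)^m / \|u\|^{2(m-1)}$ is false in general, so instead I would argue as follows. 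Take $u$ a unit eigenvector for $\lambda_{\max}(D) = \lambda \ge 0$ with $u \ge 0$ (which exists since $D \ge 0$ symmetric). Then $u'D^m u = \lambda^m$ for all $m \ge 0$. On the other hand $u'D^m u = \sum_i u_i (D^m u)_i \le \sum_i u_i (D^m \mathbf 1)_i \cdot \max_j u_j \le \|u\|_1 \cdot \|u\|_\infty \cdot (D^m\mathbf 1)_{i^*}$ — this is lossy. The sharp statement I want is: $\sum_{m\ge0}\lambda^m = \sum_{m\ge 0} u'D^m u = u'\big(\sum_m D^m\big)u = u'(I-D)^{-1}u$, and $(I-D)^{-1}\mathbf 1$ is exactly the vector of expected absorption times, so $(I-D)^{-1}$ has all row sums at most $H$; since it is also symmetric with nonnegative entries, $u'(I-D)^{-1}u \le \|u\|_1 \|u\|_\infty \cdot \max_i \big((I-D)^{-1}\mathbf 1\big)_i \le H$ when... again lossy. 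The genuinely clean bound: $u'(I-D)^{-1} u \le \lambda_{\max}\big((I-D)^{-1}\big) = \frac{1}{1-\lambda_{\max}(D)}$, and separately $u'(I-D)^{-1}u = u' \sum_m D^m u$; choosing $u$ to be the nonnegative unit eigenvector for $\lambda_{\max}(D)$ gives $\frac{1}{1-\lambda} = \sum_m \lambda^m = u'(I-D)^{-1}u$. Then bound $u'(I-D)^{-1}u \le \big(\max_i \sum_j ((I-D)^{-1})_{ij}\big) \cdot \|u\|_\infty \|u\|_1$; since $u$ is a unit vector, $\|u\|_\infty \|u\|_1 \le \sqrt{k'}$ — still not clean.

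The actually-correct elementary argument, which I would write up, is: let $u\ge 0$, $\|u\|_2 = 1$, be the top eigenvector, so $Du = \lambda u$ and $\lambda = \lambda_{\max}(D) \ge 0$. Since the row sums of $(I-D)^{-1}$ equal the $H_i$, we get, multiplying $Du = \lambda u$ is not needed; instead use $u = (1-\lambda)(I-D)^{-1}u$, take inner product with $\mathbf 1$: $\mathbf 1' u = (1-\lambda)\, \mathbf 1'(I-D)^{-1}u = (1-\lambda)\sum_i u_i H_i \le (1-\lambda) H \sum_i u_i = (1-\lambda)H\,(\mathbf 1' u)$. Since $u \ge 0$ and $u \ne 0$ we have $\mathbf 1'u > 0$, so dividing gives $1 \le (1-\lambda)H$, i.e. $\lambda \le 1 - 1/H$, which is exactly the claim. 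The main obstacle — and the one subtlety I would be careful about — is guaranteeing a \emph{nonnegative} top eigenvector $u$ and that $\mathbf 1'u > 0$: this follows from $D$ being symmetric with nonnegative entries (so $\lambda_{\max}(D) = \max_{u\ge0,\|u\|=1} u'Du$ is attained at some $u\ge 0$, and then $Du \ge 0$ forces $\lambda u = Du \ge 0$; if $\lambda > 0$ then $u = \lambda^{-1}Du \ge 0$ automatically, and $u\ne 0$ gives $\mathbf 1'u>0$; the case $\lambda \le 0$ makes the bound trivial since $H \ge 1$). The path-to-$\{1,\dots,k\}$ hypothesis is what makes $H < \infty$ and $(I-D)^{-1} = \sum_m D^m$ converge, i.e. it guarantees $\lambda_{\max}(D) < 1$ so the division is legitimate.
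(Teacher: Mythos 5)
Your final argument is correct and rests on the same key fact as the paper's proof, namely that the fundamental matrix $(I-D)^{-1}=\sum_{m\ge 0}D^m$ exists (by the reachability hypothesis) and has row sums equal to the expected absorption times $H_i$. The only difference is cosmetic: the paper bounds the eigenvalue $1/(1-\lambda_{\max}(D))$ of $(I-D)^{-1}$ by its infinity norm $\max_i H_i = H$ (citing Seneta), whereas you re-derive that inequality in this symmetric nonnegative case by pairing a nonnegative top eigenvector of $D$ with the all-ones vector, so the approach is essentially the same.
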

\begin{proof} 
Indeed, $1/(1-\lambda_{\rm max}(D))$ is a positive number which is an eigenvalue of $(I-D)^{-1}$ and is
consequently upper bounded by $||(I-D)^{-1}||_{\infty}$. But by \cite{seneta2006non}, Theorem 4.5, we have that
the sum of the entries in the $i$'th row of $(I-D)^{-1}$ is $H_i$.
\end{proof}

\begin{lemma}\label{lemm:alpha-max} 
For all $t \geq 0$, $ \lambda_{\rm max}(D(t)) \leq 1 - \frac{1}{6n^3}$.
\end{lemma}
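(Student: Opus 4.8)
The plan is to apply Lemma \ref{lemm:D-eigenvalue-upperbound} with $Q = P(t)$ as written in block form \eqref{eq:block-representation-C-D-P}: here $k = n$, the absorbing set is $S = \{(x,x) : x \in V\}$, and $D = D(t)$ is symmetric and sub-stochastic. The conclusion of that lemma gives $\lambda_{\max}(D(t)) \leq 1 - 1/H$, where $H = \max_{(x,y)} H_{(x,y)}$ and $H_{(x,y)}$ is the expected absorption time into $S$ of the Markov chain with transition matrix $P(t)$ started at $(x,y)$. So it suffices to show $H \leq 6n^3$ for every fixed $t$. First I would check the hypothesis of Lemma \ref{lemm:D-eigenvalue-upperbound}: since $\mathcal{G}(t)$ is connected, from any state $(x,y)$ with $x \neq y$ one can move the first coordinate along a path toward $y$ (each such single-coordinate move has positive probability under \eqref{eq:P-matrix}) and thereby reach a diagonal state, so there is indeed a path from every node to $S$.

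Next I would identify the chain with transition matrix $P(t)$ as (a time-frozen version of) the virtual meeting process. Comparing \eqref{eq:P-matrix} with the recursion \eqref{eq:integral-recursion}/\eqref{eq:vector-form-v}, the off-diagonal transitions of $P(t)$ at rate $\lambda_{\cdot\cdot}(t)/n$ are exactly the one-step transition probabilities, per Poisson arrival of the global rate-$n$ clock, of the pair of walkers in the virtual process on the frozen graph $\mathcal{G}(t)$; the diagonal self-loop probability $1 - \Lambda_{xy}(t)/n$ absorbs the ``nothing happens to either walker'' event; and absorption into $S$ is precisely the meeting event. Since the global clock has rate $n$, the expected number of steps to absorption equals $n$ times the expected meeting time on the frozen graph, i.e. $H_{(x,y)} = n \cdot \widetilde{M}^{\rm v}(x,y)$, where $\widetilde{M}^{\rm v}$ denotes the virtual meeting time for the \emph{static} graph $\mathcal{G}(t)$. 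Now $\mathcal{G}(t)$ has self-loops, which the static analysis of Section \ref{sec:mainresults} did not; but self-loops only slow down a meeting process, so one can compare with the self-loop-free graph obtained by deleting them (equivalently, the frozen-graph virtual chain is a lazy version of the Section-\ref{sec:mainresults} virtual chain), and conclude $\widetilde{M}^{\rm v}(x,y) \leq 6n^2$ from Lemma \ref{thm:meeting-hitting-harmonic} (possibly with the constant adjusted for the bookkeeping of the extra $\mu$-rates in \eqref{eq:virtual-rates}, which again are only laziness corrections and do not increase the meeting time). Hence $H_{(x,y)} \leq n \cdot 6n^2 = 6n^3$, giving $H \leq 6n^3$ and therefore $\lambda_{\max}(D(t)) \leq 1 - 1/(6n^3)$.

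The main obstacle I anticipate is the careful justification that passing to the frozen graph $\mathcal{G}(t)$, and dealing with its self-loops and the modified rates $\mu_{rs}(t)$ of \eqref{eq:virtual-rates}, does not inflate the meeting time beyond the $6n^2$ bound already established for the static self-loop-free setting in Lemma \ref{thm:meeting-hitting-harmonic}. Concretely one must argue that (i) the static virtual meeting time is monotone under adding laziness/self-loops only up to the overall time-normalization, and (ii) the particular redistribution of rate in \eqref{eq:virtual-rates} (doubling $\{u,w\}$, shaving the two self-loops) is exactly the static virtual construction of Section \ref{sec:mainresults}, so Lemma \ref{thm:meeting-hitting-harmonic} applies directly once we restrict to a single frozen graph. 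Everything else — the hypothesis check, the identification of $P(t)$ with a meeting chain, and the arithmetic $n \cdot 6n^2 = 6n^3$ — is routine.
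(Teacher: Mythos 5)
Your proposal is correct and follows essentially the same route as the paper: apply Lemma \ref{lemm:D-eigenvalue-upperbound} to $P(t)$, identify the expected absorption time into $S$ as $n$ times the virtual meeting time on the frozen graph $\mathcal{G}(t)$, and bound the latter by $6n^2$ via Lemma \ref{thm:meeting-hitting-harmonic}. The only refinement the paper makes to your ``laziness'' step is to write the hitting-time recursion for $P(t)$ and compare it with Eq.~(\ref{virtualmeet}), showing that $H^{P(t)}(\{x,y\})/n$ \emph{equals} the static virtual meeting time exactly (self-loop arrivals are no-ops in continuous time), so no monotonicity argument or constant adjustment is needed --- note that, taken literally, your phrase ``self-loops only slow down a meeting process'' would push the inequality in the wrong direction, even though your conclusion stands because the two meeting times coincide.
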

\begin{proof}
Fix $t$ and let $H^{\rm P(t)}(\{x,y\})$ denote the hitting time to $S = \{ \{i,i\} ~|~ i \in V\}$ in $P(t)$ starting from $\{x,y\}$. It is immediate that when $x \neq y$,
\begin{align}\nonumber 
H^{\rm P(t)}(\{x,y\})&=\frac{n}{\Lambda_{xy}(t)} + \sum_{x_i \in N_x'(t)} \frac{\lambda_{xx_i}(t)}{\Lambda_{xy}(t)} H^{\rm P(t)}(\{x_i,y\}) \cr 
&\qquad+\sum_{y_i \in N_y'(t)} \frac{\lambda_{y y_i}(t)}{\Lambda_{xy}(t)} H^{\rm P(t)}(\{x, y_i\}).  
\end{align} 
Comparing this to Eq. (\ref{virtualmeet}) and noting that $H^{P(t)}(\{i,i\})=0$ for all $i \in V$, we obtain that $H^{P(t)}(\{x,y\})/n$ equals the expected meeting time in the virtual process on the graph $G(t)$ starting from $x$ and $y$. Appealing to Lemma \ref{thm:meeting-hitting-harmonic}, we can conclude that expected
hitting time in $P(t)$ to the first $n$ states is at most $6n^3$. Appealing to Lemma \ref{lemm:D-eigenvalue-upperbound} then completes the proof.
\end{proof}

Observe that the lower bound
\begin{equation} \label{lambda-lower} \lambda_{\rm min}(D(t)) \geq 1 - \frac{4}{n} \end{equation}
follows immediately by Gershgorin circles due to the observation that $\Lambda_{xy}(t) \leq 2$ for all $x,y \in V, t \geq 0$. We are now in a position to state and prove the main result of this section.

\begin{theorem}\label{thm:main-time-varying-convergence-time}
The expected time until the quantized Metropolis dynamics over time-varying connected networks
reaches the consensus set is  $O(n^2 \log^2(n))$.
\end{theorem}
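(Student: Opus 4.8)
The plan is to follow the structure of the proof of Theorem~\ref{thm:main-static-convergence-time}: bound the meeting time of the original process uniformly over the starting time $t$, derive from it an exponential tail, and then run the Lyapunov argument on $V(t)=\max_i x_i(t)-\min_i x_i(t)$ essentially verbatim. The only genuinely new work is the uniform meeting-time bound, which must now be extracted from the matrix recursion \eqref{eq:vector-form-v} together with the spectral estimates in Lemma~\ref{lemm:alpha-max} and \eqref{lambda-lower}, since the reversibility/hidden-vertex machinery used for static graphs has no analogue when the graph varies.

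First I would bound $\overline{M}^{\rm v}:=\sup_{t\ge 0}\sup_{x\neq y}M^{\rm v}_t(x,y)$. By Lemma~\ref{lemm:alpha-max} and \eqref{lambda-lower}, every $D(t)$ is symmetric with all eigenvalues in $[1-\tfrac4n,\,1-\tfrac1{6n^3}]$, so for $n\ge 4$ its spectral norm is $\|D(t)\|_2\le 1-\tfrac1{6n^3}$ (the cases $n\le 3$ being immediate); moreover $D(t)$ is nonnegative and substochastic, so $0\le w\le{\bf 1}$ entrywise implies $0\le D(t)w\le{\bf 1}$ entrywise. Iterating \eqref{eq:vector-form-v} writes $v(t)=\tfrac1n\sum_{j=0}^{k-1}J_j(t){\bf 1}+R_k(t)$, where $J_j(t)$ is an average over $[-1,0]^j$ of length-$j$ products $D(\psi_1)\cdots D(\psi_j)$ of the $D(\cdot)$'s at times $\psi_i\ge t$; hence $\|J_j(t)\|_2\le(1-\tfrac1{6n^3})^j$ and $0\le J_j(t){\bf 1}\le{\bf 1}$ entrywise, while $\|R_k(t)\|_2\le(1-\tfrac1{6n^3})^k\sup_s\|v(s)\|_2$. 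A crude a~priori bound — on any unit-length interval the graph is fixed and connected, so from any pair of starting nodes the two virtual walkers meet within that interval with probability $\ge p_n>0$ depending only on $n$ — gives $\sup_s\|v(s)\|_2<\infty$, so $R_k(t)\to 0$ and $v(t)=\tfrac1n\sum_{j\ge0}J_j(t){\bf 1}$. Since all terms are nonnegative, $\|v(t)\|_\infty\le\tfrac1n\sum_{j\ge0}\|J_j(t){\bf 1}\|_\infty\le\tfrac1n\sum_{j\ge0}\min\{1,\,\sqrt{n(n-1)}\,(1-\tfrac1{6n^3})^j\}$, and the discrete analogue of the identity $\int_0^\infty\min\{1,Ae^{-at}\}dt=\tfrac{1+\log A}{a}$ (here $A=\sqrt{n(n-1)}$ and $1/a\le 6n^3$) yields $\|v(t)\|_\infty=O(n^2\log n)$, uniformly in $t$; that is, $\overline{M}^{\rm v}=O(n^2\log n)$. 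Note that bounding the geometric series by its plain sum $6n^3$ would only give $O(n^3)$; using $\|J_j(t){\bf 1}\|_\infty\le 1$ for the first $\approx 6n^3\log n$ terms is exactly what recovers the $O(n^2\log n)$ scale, and is the source of the extra logarithm relative to the static bound.

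Next I would pass to the original process, by the coupling in the proof of Lemma~\ref{orig-lemma}: the two processes run identically until the walkers are neighbours at nodes $u,w$ at some time $\sigma$, at which instant the virtual process's rate-$2\lambda_{uw}(\sigma)$ edge $\{u,w\}$ is split into a ``black'' copy of rate $\lambda_{uw}(\sigma)$ synchronised with the original edge (together with the compensating decrease of the self-loop rates) and an extra ``red'' copy of rate $\lambda_{uw}(\sigma)$; a meeting in the virtual process is also a meeting in the original process whenever it is caused by a black tick, i.e.\ with probability $\tfrac12$, and otherwise the original process must still meet, starting afresh from the current neighbour pair. The only new feature relative to the static case is that the topology may change while the walkers are neighbours, but the split is instantaneous and the virtual rates $\mu_{rs}(\cdot)$ of \eqref{eq:virtual-rates} are recomputed after every topology change, so the coupling is unaffected. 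Hence $\tau^{\rm o}_t(x,y)$ is stochastically dominated by a $\mathrm{Geom}(\tfrac12)$-distributed sum of independent virtual meeting times, so $\overline{M}:=\sup_{t,x,y}M^{\rm o}_t(x,y)\le 2\overline{M}^{\rm v}=O(n^2\log n)$. By Markov's inequality and restarting, exactly as in the remark following Lemma~\ref{orig-lemma}, $\mathbb{P}\bigl(\tau^{\rm o}_t(x,y)>s\bigr)\le e^{-\lfloor s/(e\overline{M})\rfloor}$ for all $t\ge0$, $x,y\in V$ and $s\ge0$.

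Finally I would run the Lyapunov argument of Theorem~\ref{thm:main-static-convergence-time} without change. With $V(t)=\max_i x_i(t)-\min_i x_i(t)$, the same ``trace back the walkers'' reasoning shows that if $V$ has not strictly decreased on $[t,t+s]$ then some original-process walker pair started from $S_{\rm max}(t)\times S_{\rm min}(t)$ at time $t$ has not yet met, so if $C$ is the time until $V$ drops by $1$ then $\mathbb{P}(C>s)\le n^2 e^{1-s/(e\overline{M})}$ and $\mathbb{E}[C]\le\int_0^\infty\min\{1,n^2 e^{1-s/(e\overline{M})}\}\,ds=e\overline{M}\bigl(1+\log(en^2)\bigr)=O(\overline{M}\log n)=O(n^2\log^2 n)$. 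Since $V$ is integer-valued, non-increasing, and takes a value $\le 1$ exactly on the consensus set, it decreases at most $L-l-1$ times, and the bound on $\mathbb{E}[C]$ being uniform in the current time and configuration lets us add these contributions up (treating the input range $L-l$ as a constant, as in the static theorem), giving total expected convergence time $O(n^2\log^2 n)$. I expect the first step to be the main obstacle: making the truncated-geometric-series estimate rigorous needs the a~priori finiteness of $\sup_t\|v(t)\|_2$ and the measurability/change-of-variables care already taken around \eqref{eq:vector-form-v}; a secondary point is verifying that the virtual-to-original coupling of Lemma~\ref{orig-lemma} survives the presence of topology changes.
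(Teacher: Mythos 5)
Your proposal is correct and follows essentially the same route as the paper: the uniform bound $\sup_t\|v(t)\|_\infty=O(n^2\log n)$ via the iterated recursion \eqref{eq:vector-form-v}, the norm inequality $\|\cdot\|_\infty\le\sqrt{n(n-1)}\,\|\cdot\|_2$ combined with Lemma~\ref{lemm:alpha-max} and \eqref{lambda-lower}, then the virtual-to-original coupling of Lemma~\ref{orig-lemma} and the Lyapunov argument of Theorem~\ref{thm:main-static-convergence-time}. The only (cosmetic) difference is that you sum the resulting geometric series directly, truncating with the substochasticity bound $\|J_j(t){\bf 1}\|_\infty\le 1$, whereas the paper stops at $k^*=O(n^3\log n)$ iterations and closes a fixed-point inequality $M-\epsilon\le k^*/n+M/2$; both rest on the same a priori finiteness of $\sup_t\|v(t)\|_\infty$ and yield the same $O(n^2\log n)$ meeting-time bound.
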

\smallskip
\begin{proof} 
Note that we may iterate the recursion of Eq. \eqref{eq:vector-form-v} to obtain \begin{footnotesize}
\begin{align}\nonumber
v(t)&= \frac{1}{n} {\bf 1} + \int_{-1}^0 D \left( t - \frac{ \ln (-z)}{n} \right) \left( \frac{1}{n} {\bf 1} \right.\cr 
&\left.+  \int_{-1}^0 \!\!D \left( t \!-\! \frac{\ln(-z)}{n} \!-\! \frac{\ln (-z')}{n} \right) \!\cdot\! v \left( t \!-\! \frac{\ln(-z)}{n} \!-\! \frac{\ln (-z')}{n} \right)  dz' \right) \!dz  \cr 
&\leq \frac{2}{n} {\bf 1} + \int_{-1}^0 \int_{-1}^0 D \left( t - \frac{\ln (-z)}{n} \right) D \left( t - \frac{\ln(-z)}{n} - \frac{\ln (-z')}{n} \right)\cr 
& \qquad \cdot v \left( t - \frac{\ln(-z)}{n} - \frac{\ln (-z')}{n} \right)   ~ dz' dz,
\end{align}\end{footnotesize}  
$\!\!\!$where the last step used the sub-stochasticity of $D(\cdot)$. Iterating this $k$ times, we obtain
\begin{align}\label{eq:k-recursion-iteration}
v(t)\!\leq\! \frac{k}{n}{\bf 1}\!+\!\!\!\!\!\!\int\limits_{[-1,0]^k}\!\!\!\!\!\!\resizebox{0.02\hsize}{!}{$~$}\resizebox{0.6\hsize}{!}{$\left[\!\prod\limits_{j=1}^{k} \!D\!\left(\!t\!-\!\!\sum\limits_{i=1}^{j}\!\!\frac{\ln(-z_i)}{n}\!\!\right)\!\right]\!v\!\left(\!t\!-\!\!\!\sum\limits_{i=1}^{k}\!\!\frac{\ln(-z_i)}{n}\!\!\right)$}dz_k\resizebox{0.03\hsize}{!}{$\ldots$}dz_1.
\end{align} Now let us use introduce the notation $M=\sup_{t\ge 0}\|v(t)\|_{\infty}$. Elementary arguments suffice to establish that $M<\infty$; recall that $v(t)$ stacks up expected meeting times in the virtual process starting at time $t$, and the finiteness of $M$ can follow from the observation that the probability of intersection in any positive length interval $[t,t+a]$ is positive and bounded away from zero independently of $t$. Thus for every $\epsilon>0$ there exists $t^*$ such that $M-\epsilon\leq \|v(t^*)\|_{\infty}$. Since the entries of $D(\cdot)$ and $v(\cdot)$ are always nonnegative, we can take the infinity-norm from both sides of \eqref{eq:k-recursion-iteration} and use triangle inequality to obtain
\begin{align}\label{eq:infinit-norm-to-2-norm}
\!&M-\epsilon \leq \|v(t^*)\|_{\infty}\leq \frac{k}{n}\cr
&\!\!+\!\!\int\!\!\!\!\int\!\!\resizebox{0.03\hsize}{!}{$\ldots$}\!\!\!\int\resizebox{0.75\hsize}{!}{$\left\|\prod\limits_{j=1}^{k} \!D\!\left(\!t^*\!-\!\!\sum\limits_{i=1}^{j}\!\!\frac{\ln(-z_i)}{n}\!\!\right)\right\|_{\infty}\!\left\|v\!\left(\!t^*\!-\!\!\!\sum\limits_{i=1}^{k}\!\!\frac{\ln(-z_i)}{n}\!\!\right)\right\|_{\infty}$}\!\!\!\!dz_k\resizebox{0.03\hsize}{!}{$\ldots$}dz_1\cr
&\!\leq \frac{k}{n}+M\!\! \int\!\!\!\!\int\!\!\resizebox{0.03\hsize}{!}{$\ldots$}\!\!\!\int\resizebox{0.45\hsize}{!}{$\left\|\prod\limits_{j=1}^{k} \!D\!\left(\!t^*\!-\!\!\sum\limits_{i=1}^{j}\!\!\frac{\ln(-z_i)}{n}\!\!\right)\right\|_{\infty}$}\!\!\!dz_k\resizebox{0.03\hsize}{!}{$\ldots$}dz_1\cr
&\!\leq\frac{k}{n}+M\!\! \int\!\!\!\!\int\!\!\resizebox{0.03\hsize}{!}{$\ldots$}\!\!\!\int\resizebox{0.6\hsize}{!}{$\!\!\sqrt{n(n-1)}\left\|\prod\limits_{j=1}^{k} \!D\!\left(\!t^*\!-\!\!\sum\limits_{i=1}^{j}\!\!\frac{\ln(-z_i)}{n}\!\!\right)\right\|_{2}$}\!dz_k\resizebox{0.03\hsize}{!}{$\ldots$}dz_1\cr
&\!\leq\frac{k}{n}+M\!\! \int\!\!\!\!\int\!\!\resizebox{0.03\hsize}{!}{$\ldots$}\!\!\!\int\resizebox{0.6\hsize}{!}{$\!\!\sqrt{n(n-1)}\prod\limits_{j=1}^{k} \!\left\|D\!\left(\!t^*\!-\!\!\sum\limits_{i=1}^{j}\!\!\frac{\ln(-z_i)}{n}\!\!\right)\right\|_{2}$}\!dz_k\resizebox{0.03\hsize}{!}{$\ldots$}dz_1\cr
&\!\leq \frac{k}{n}+M\!\sqrt{n(n-1)}\left(\sup_{t\ge 0}\left\{|\lambda_{\max}(D(t))|, |\lambda_{\rm min}(D(t))|\right\}\right)^{k}\cr
&\!\le \frac{k}{n}+M\!\sqrt{n(n-1)}\left(1-\frac{1}{6n^3}\right)^{k},
\end{align}
where in the fourth inequality we have used the fact that the infinity-norm of a matrix is always upper bounded by its induced 2-norm times the square root of its dimension, i.e., $\|D(\cdot)\|_{\infty}\leq \sqrt{n(n-1)}\|D(\cdot)\|_2$. Moreover, the last inequality is valid due to Lemma \ref{lemm:alpha-max} and Eq. (\ref{lambda-lower}).

Let us choose $k^*$ so that $\sqrt{n(n-1)} \left( 1 - \frac{1}{6n^3} \right)^{k^*} \leq \frac{1}{2}.$ Appealing to the inequality $(1-1/x)^{x} \leq e^{-1}$, we obtain that we may choose $k^* = O(n^3 \log n)$ to accomplish this. Plugging this into the last line of Eq. (\ref{eq:infinit-norm-to-2-norm}), we immediately obtain $M - \epsilon \leq \frac{k^*}{n} + \frac{M}{2}$. Since this holds for any $\epsilon > 0$, it implies that $M \leq 2 k^*/n = O(n^2 \log n)$.

We have thus obtained an upper bound of $O(n^2 \log n)$ on the expected meeting time in the virtual process starting from any pair of nonidentical nodes and any time $t$. The rest of the proof directly parallels the arguments we have made in the fixed graph case, and we only sketch it rather than repeat the relevant parts verbatim. The first step is to argue that $2M$ is an upper bound on the meeting time of the original process. The proof proceeds exactly as in the case of Lemma \ref{orig-lemma} by coupling the two processes and conditioning on the transition at the last step in the meeting time of the virtual process.  The next (and final) step is to argue that the time it takes $V(t) = \max_i x_i(t) - \min_i x_i(t)$ to shrink by $1$ is upper bounded by $O(n^2 \log^2 n)$. This follows exactly as in the proof of Theorem \ref{thm:main-static-convergence-time}. Indeed, we can argue that if $V(t')=V(t)$ at some time $t'>t$, this means that a pair of nodes performing a random walk according to the original process have not met between times $t$ and $t'$. Upper bounding the latter using the union bound, we once again obtain that the expected time it takes for $V(t)$ to shrink by 1 is the expected meeting time in the virtual process times a multiplicative factor of $O( \log n)$. 
\end{proof}

%\rasoul{Finally, normalizing our clock by a factor of $n$ with a global clock of unit time indices, we obtain an upper bound of $O(n^3\log^2 n)$ for the expected convergence time of the Metropolis protocol over time-varying networks.} 

\section{Conclusion}\label{sec:conclusion}
We have studied the quantized consensus problem on undirected connected networks in both static and time-varying settings. In particular, we have proved an upper bound of $O(n^2 \log^2 n)$ for the convergence time of quantized Metropolis over time-varying connected networks.

A future direction of research would be to improve convergence times further. For example, \cite{linear} attained a linear convergence time for consensus on any fixed graph, and it is an open question to obtain a quantized consensus protocol which replicates this. Moreover, an interesting problem is to see to what extent the results here can be carried over to protocols with nonlinear transmission \cite{chen2013consensusnonlinearity}.

\bibliographystyle{IEEEtran}
\bibliography{thesisrefs}
\end{document}